\documentclass[12pt]{article}
\usepackage{amssymb}
\usepackage{multirow}
\usepackage[dvips]{graphics}
\usepackage{epsfig,rotating}
\usepackage{amssymb,amsmath}
\usepackage{latexsym}
\usepackage{tabularx}
\usepackage{array}
\usepackage{natbib}
\usepackage[retainorgcmds]{IEEEtrantools}
\usepackage{url}            
\usepackage{booktabs}       
\usepackage{amsfonts}       
\usepackage{nicefrac}       
\usepackage[protrusion=true,expansion=true]{microtype}      
\usepackage{lipsum}
\usepackage{amsmath}
\usepackage{amssymb}
\usepackage{indentfirst}
\usepackage{mathtools}
\usepackage{algorithm}
\usepackage{algpseudocode}
\usepackage{appendix}
\usepackage{hyperref} 
\hypersetup{
	colorlinks   = true, 
	urlcolor     = blue, 
	linkcolor    = blue, 
	citecolor   = blue 
}
\usepackage{nccmath}
\usepackage{centernot}

\renewcommand{\baselinestretch}{1.5}
\renewcommand{\appendixpagename}{\bf \huge Appendix}

\setlength{\textheight}{8.8in}
\setlength{\textwidth}{6.6in}
\setlength{\evensidemargin}{0.2in}
\setlength{\oddsidemargin}{0.2in}

\setlength{\headheight}{0in}
\setlength{\headsep}{0in}

\usepackage{bm,fullpage,enumerate,setspace}
	
\newcommand{\indep}{\perp \!\!\! \perp}
\newcommand{\nindep}{\not\!\perp\!\!\!\perp}

\begin{document}

\def\reduce#1{{\small  #1}}
\def\c#1{\ensuremath{\mathcal{#1}}}
\def\foot#1{\footnote{#1}}

\newcommand{\nmathbf}{\bm}

\def\bfA{\nmathbf A}
\def\bfB{\nmathbf B}
\def\bfC{\nmathbf C}
\def\bfD{\nmathbf D}
\def\bfE{\nmathbf E}
\def\bfF{\nmathbf F}
\def\bfG{\nmathbf G}
\def\bfH{\nmathbf H}
\def\bfI{\nmathbf I}
\def\bfJ{\nmathbf J}
\def\bfK{\nmathbf K}
\def\bfL{\nmathbf L}
\def\bfM{\nmathbf M}
\def\bfN{\nmathbf N}
\def\bfO{\nmathbf O}
\def\bfP{\nmathbf P}
\def\bfQ{\nmathbf Q}
\def\bfR{\nmathbf R}
\def\bfS{\nmathbf S}
\def\bfT{\nmathbf T}
\def\bfU{\nmathbf U}
\def\bfV{\nmathbf V}
\def\bfW{\nmathbf W}
\def\bfX{\nmathbf X}
\def\bfY{\nmathbf Y}
\def\bfZ{\nmathbf Z}

\def\bfa{\nmathbf a}
\def\bfb{\nmathbf b}
\def\bfc{\nmathbf c}
\def\bfd{\nmathbf d}
\def\bfe{\nmathbf e}
\def\bff{\nmathbf f}
\def\bfg{\nmathbf g}
\def\bfh{\nmathbf h}
\def\bfi{\nmathbf i}
\def\bfj{\nmathbf j}
\def\bfk{\nmathbf k}
\def\bfl{\nmathbf l}
\def\bfm{\nmathbf m}
\def\bfn{\nmathbf n}
\def\bfo{\nmathbf o}
\def\bfp{\nmathbf p}
\def\bfq{\nmathbf q}
\def\bfr{\nmathbf r}
\def\bfs{\nmathbf s}
\def\bft{\nmathbf t}
\def\bfu{\nmathbf u}
\def\bfv{\nmathbf v}
\def\bfw{\nmathbf w}
\def\bfx{\nmathbf x}
\def\bfy{\nmathbf y}
\def\bfz{\nmathbf z}

\def\bfalpha  {\nmathbf \alpha}
\def\bfbeta   {\nmathbf \beta}
\def\bfgamma  {\nmathbf \gamma}
\def\bfdelta  {\nmathbf \delta}
\def\bfepsilon{\nmathbf \epsilon}
\def\bfzeta   {\nmathbf \zeta}
\def\bfeta    {\nmathbf \eta}
\def\bftheta  {\nmathbf \theta}
\def\bfiota   {\nmathbf \iota}
\def\bfkappa  {\nmathbf \kappa}
\def\bflambda {\nmathbf \lambda}
\def\bfmu     {\nmathbf \mu}
\def\bfnu     {\nmathbf \nu}
\def\bfxi     {\nmathbf \xi}
\def\bfomicron{\nmathbf \omicron}
\def\bfpi     {\nmathbf \pi}
\def\bfrho    {\nmathbf \rho}
\def\bfsigma  {\nmathbf \sigma}
\def\bftau    {\nmathbf \tau}
\def\bfupsilon{\nmathbf \upsilon}
\def\bfphi    {\nmathbf \phi}
\def\bfpsi    {\nmathbf \psi}
\def\bfchi    {\nmathbf \chi}
\def\bfomega  {\nmathbf \omega}

\def\bfAlpha  {\nmathbf \Alpha}
\def\bfBeta   {\nmathbf \Beta}
\def\bfGamma  {\nmathbf \Gamma}
\def\bfDelta  {\nmathbf \Delta}
\def\bfEpsilon{\nmathbf \Epsilon}
\def\bfZeta   {\nmathbf \Zeta}
\def\bfEta    {\nmathbf \Eta}
\def\bfTheta  {\nmathbf \Theta}
\def\bfIota   {\nmathbf \Iota}
\def\bfKappa  {\nmathbf \Kappa}
\def\bfLambda {\nmathbf \Lambda}
\def\bfMu     {\nmathbf \Mu}
\def\bfNu     {\nmathbf \Nu}
\def\bfXi     {\nmathbf \Xi}
\def\bfOmicron{\nmathbf \Omicron}
\def\bfPi     {\nmathbf \Pi}
\def\bfRho    {\nmathbf \Rho}
\def\bfSigma  {\nmathbf \Sigma}
\def\bfTau    {\nmathbf \Tau}
\def\bfUpsilon{\nmathbf \Upsilon}
\def\bfPhi    {\nmathbf \Phi}
\def\bfPsi    {\nmathbf \Psi}
\def\bfChi    {\nmathbf \Chi}
\def\bfOmega  {\nmathbf \Omega}

\newcommand{\ttheta}{\tilde{\theta}}
\newcommand{\bfzero}{{\nmathbf 0}}
\newcommand{\bfone}{{\nmathbf 1}}
\newcommand{\vareps}{\varepsilon}
\def\bfvareps{\nmathbf \varepsilon}
\newcommand{\tgamma}{\tilde\gamma}

\newcommand{\cfA}{\mbox{\c{A}}}
\newcommand{\cfB}{\mbox{\c{B}}}
\newcommand{\cfC}{\mbox{\c{C}}}
\newcommand{\cfD}{\mbox{\c{D}}}
\newcommand{\cfE}{\mbox{\c{E}}}
\newcommand{\cfF}{\mbox{\c{F}}}
\newcommand{\cfG}{\mbox{\c{G}}}
\newcommand{\cfH}{\mbox{\c{H}}}
\newcommand{\cfI}{\mbox{\c{I}}}
\newcommand{\cfJ}{\mbox{\c{J}}}
\newcommand{\cfK}{\mbox{\c{K}}}
\newcommand{\cfL}{\mbox{\c{L}}}
\newcommand{\cfM}{\mbox{\c{M}}}
\newcommand{\cfN}{\mbox{\c{N}}}
\newcommand{\cfO}{\mbox{\c{O}}}
\newcommand{\cfP}{\mbox{\c{P}}}
\newcommand{\cfQ}{\mbox{\c{Q}}}
\newcommand{\cfR}{\mbox{\c{R}}}
\newcommand{\cfS}{\mbox{\c{S}}}
\newcommand{\cfT}{\mbox{\c{T}}}
\newcommand{\cfU}{\mbox{\c{U}}}
\newcommand{\cfV}{\mbox{\c{V}}}
\newcommand{\cfX}{\mbox{\c{X}}}
\newcommand{\cfY}{\mbox{\c{Y}}}
\newcommand{\cfZ}{\mbox{\c{Z}}}

\def\boldfacefake#1{\kern-4pt
   \hbox{ \mathsurround=0pt
   \hbox to 0.4pt{$#1$\hss}\hbox to 0.4pt{$#1$\hss}\hbox {$#1$}}}

\def\bfitI{\mbox{\boldfacefake{\it I}}}

\newcommand{\bcfA}{\boldsymbol{\mathcal{A}}}
\newcommand{\bcfB}{\boldsymbol{\mathcal{B}}}
\newcommand{\bcfC}{\boldsymbol{\mathcal{C}}}
\newcommand{\bcfD}{\boldsymbol{\mathcal{D}}}
\newcommand{\bcfE}{\boldsymbol{\mathcal{E}}}
\newcommand{\bcfF}{\boldsymbol{\mathcal{F}}}
\newcommand{\bcfG}{\boldsymbol{\mathcal{G}}}
\newcommand{\bcfH}{\boldsymbol{\mathcal{H}}}
\newcommand{\bcfI}{\boldsymbol{\mathcal{I}}}
\newcommand{\bcfJ}{\boldsymbol{\mathcal{J}}}
\newcommand{\bcfK}{\boldsymbol{\mathcal{K}}}
\newcommand{\bcfL}{\boldsymbol{\mathcal{L}}}
\newcommand{\bcfM}{\boldsymbol{\mathcal{M}}}
\newcommand{\bcfN}{\boldsymbol{\mathcal{N}}}
\newcommand{\bcfO}{\boldsymbol{\mathcal{O}}}
\newcommand{\bcfP}{\boldsymbol{\mathcal{P}}}
\newcommand{\bcfQ}{\boldsymbol{\mathcal{Q}}}
\newcommand{\bcfR}{\boldsymbol{\mathcal{R}}}
\newcommand{\bcfS}{\boldsymbol{\mathcal{S}}}
\newcommand{\bcfT}{\boldsymbol{\mathcal{T}}}
\newcommand{\bcfU}{\boldsymbol{\mathcal{U}}}
\newcommand{\bcfV}{\boldsymbol{\mathcal{V}}}
\newcommand{\bcfW}{\boldsymbol{\mathcal{W}}}
\newcommand{\bcfX}{\boldsymbol{\mathcal{X}}}
\newcommand{\bcfY}{\boldsymbol{\mathcal{Y}}}
\newcommand{\bcfZ}{\boldsymbol{\mathcal{Z}}}


\newcommand{\g}{\,\vert\,}
\newcommand{\p}{\mbox{P}}
\newcommand{\D}{\mbox{D}}
\newcommand{\E}{\mbox{E}}
\newcommand{\Mo}{\mbox{Mo}}
\newcommand{\Me}{\mbox{Me}}
\newcommand{\Cov}{\mbox{Cov}}
\newcommand{\Var}{\mbox{Var}}
\newcommand{\Corr}{\mbox{Corr}}
\newcommand{\Q}{\mbox{Q}}
\newcommand{\vech}{\mbox{vech}}
\newcommand{\tr}{\mbox{tr}}

\newcommand{\Bb}{\mbox{Bb}}
\newcommand{\Be}{\mbox{Be}}
\newcommand{\Bi}{\mbox{Bi}}
\newcommand{\Br}{\mbox{Br}}
\newcommand{\Ca}{\mbox{Ca}}
\newcommand{\Di}{\mbox{Di}}
\newcommand{\Ex}{\mbox{Ex}}
\newcommand{\Fs}{\mbox{Fs}}
\newcommand{\Ga}{\mbox{Ga}}
\newcommand{\Ge}{\mbox{Ge}}
\newcommand{\Gg}{\mbox{Gg}}
\newcommand{\Hy}{\mbox{Hy}}
\newcommand{\Ig}{\mbox{Ig}}
\newcommand{\Ip}{\mbox{Ip}}
\newcommand{\Lo}{\mbox{Lo}}
\newcommand{\Mu}{\mbox{Mu}}
\newcommand{\N}{\mbox{N}}
\newcommand{\Nb}{\mbox{Nb}}
\newcommand{\Ng}{\mbox{Ng}}
\newcommand{\Nw}{\mbox{Nw}}
\newcommand{\Pa}{\mbox{Pa}}
\newcommand{\Po}{\mbox{Po}}
\newcommand{\Pg}{\mbox{Pg}}
\newcommand{\Pn}{\mbox{Pn}}
\newcommand{\Ra}{\mbox{Ra}}
\newcommand{\St}{\mbox{St}}
\newcommand{\Un}{\mbox{Un}}
\newcommand{\Wi}{\mbox{Wi}}

\newcommand{\dd}[1]{\,d#1}
\newcommand{\barx}{\mbox{$\overline x$}}
\newcommand{\comb}[2]{{#1\choose#2}}
\newcommand{\ontop}[2]{{#1\atop#2}}
\newcommand{\h}{\hbox{$1\over2$}}
\newcommand{\ok}{\hfill\fbox{}}

\newcommand{\brow}[2]{\mbox{$\{{#1}_1,\ldots,{#1}_{#2}\}$}}
\newcommand{\prow}[2]{\mbox{$({#1}_1,\ldots,{#1}_{#2})$}}
\newcommand{\row}[2]{\mbox{${#1}_1,\ldots,{#1}_{#2}$}}
\newcommand{\data}{\row{x}{n}}
\newcommand{\bdata}{\prow{x}{n}}
\newcommand{\ie}{\emph{i.e.},\ }
\newcommand{\co}{\emph{cf.}\ }
\newcommand{\eg}{\emph{e.g.}, }
\newcommand{\etalc}{\emph{et al.},\ }
\newcommand{\etal}{\emph{et al.}\ }

\newenvironment{mat}{\left(\begin{array}}{\end{array}\right)}

\newcommand{\twomat}[5]{\begin{array}{ll}\displaystyle
              #1&\mbox{#2}\\[#5pt]#3&\mbox{#4}\end{array}}

\newcommand{\twocases}[6]
             {#1=\left\{\twomat{#2}{#3}{#4}{#5}{#6}\right.}

\newcommand{\mymatrix}[4]
           {\left(\begin{array}{ll}{#1} & {#2}\\
                                   {#3} & {#4}
                    \end{array} \right)}

\newcommand{\btable}{\begin{table}[h]\centering}
\newcommand{\etable}{\end{table}}
\newcommand{\bt}{\begin{parag}\small \let\b=\nsb \let\sb=\nssb \begin{tabular}}
\newcommand{\et}{\end{tabular}\let\b=\nb \let\sb=\nsb\end{parag}}
\newcommand{\capt}[1]
         {\begin{quotation}\caption{{\small #1 }}\vs{-2}\end{quotation}}

\newenvironment{parag}{\par}{\par}
\newenvironment{dif}
    {\begin{parag}\small \let\b=\nsb \let\sb=\nssb \begin{parag}}
    {\let\b=\nb \let\sb=\nsb \end{parag}\end{parag}}

\newenvironment{myproof}{\begin{dif} \noindent{\em Proof.~}}
            {\ok\vspace*{10pt}\end{dif}}
\newenvironment{exa}{\begin{list}{}
           {\setlength{\leftmargin}{10pt}
            \setlength{\rightmargin}{\leftmargin}}
           \item\begin{ex}\em}{\end{ex}\end{list}}

\newcommand{\be}{\begin{eqnarray}}
\newcommand{\ee}{\end{eqnarray}}
\newcommand{\ba}{\begin{eqnarray*}}
\newcommand{\ea}{\end{eqnarray*}}

\newcommand{\go}{\rightarrow}
\newcommand{\goi}{\rightarrow \infty}
\newcommand{\ul}{\underline}
\newcommand{\ol}{\overline}
\newcommand{\fr}{\frac}
\newcommand{\pn}{\par\noindent}
\newcommand{\nc}{\nonumber\\}
\newcommand{\ssum}{\mbox{$\sum$}}
\newcommand{\hhline}{\hline\hline}

\newtheorem{theorem0}{Theorem}
\newtheorem{lemma0}{Lemma}
\newtheorem{remark0}{Remark}
\newtheorem{fact0}{Fact}
\newtheorem{example0}{Example}
\newtheorem{definition0}{Definition}
\newtheorem{corollary0}{Corollary}
\newtheorem{proposition0}{Proposition}
\newtheorem{algorithmY}{Algorithm}

\newenvironment{theorem}{\begin{theorem0} \mbox{} }{\end{theorem0}}
\newenvironment{lemma}{\begin{lemma0} \mbox{}}{\end{lemma0}}
\newenvironment{remark}{\begin{remark0} \mbox{}}{\end{remark0}}
\newenvironment{fact}{\begin{fact0} \mbox{}}{\end{fact0}}
\newenvironment{example}{\begin{example0} }{\end{example0}}
\newenvironment{definition}{\begin{definition0} \mbox{}}{\end{definition0}}
\newenvironment{corollary}{\begin{corollary0} \mbox{} }{\end{corollary0}}
\newenvironment{proposition}{\begin{proposition0}\mbox{} }{\end{proposition0}}
\newenvironment{algorithm1}{\begin{algorithmY}\mbox{} }{\end{algorithmY}}

\newcommand{\reals}{\mbox{\rm I\kern-.20em R}}
\newcommand{\sreals}{\mbox{\small \rm I\kern-.20em R}}
\newcommand{\mylinel}{\renewcommand{\baselinestretch}{1.8}\tiny\small}
\newcommand{\goto}{\rightarrow}
\newcommand{\expect}{\E}

\newcommand{\bdfn}{\begin{dfn}}
\newcommand{\edfn}{\end{dfn}}
\newcommand{\bteo}{\begin{teo}}
\newcommand{\eteo}{\end{teo}}
\newcommand{\bexa}{\begin{exa}}
\newcommand{\eexa}{\end{exa}}
\newcommand{\bdif}{\begin{dif}}
\newcommand{\edif}{\end{dif}}
\newcommand{\bpro}{\begin{proof}}
\newcommand{\epro}{\end{proof}}


\thispagestyle{empty}

\begin{center}
{\bf\Large High-dimensional Statistical Inference and Variable Selection Using Sufficient Dimension Association
\\\vspace*{0.1in}}
\end{center}

\begin{center}
Shangyuan Ye$^1$\footnote{Corresponding author: sye@fiu.edu}, Shauna Rakshe$^3$, Ye Liang$^2$ 

$^{1}$Department of Mathematics and Statistics at Florida International University, Miami, FL 33119, U.S.A.\\
$^2$Department of Statistics, Oklahoma State University, OK 74078, U.S.A.\\
$^{3}$Biostatistics Shared Resource, Knight Cancer Institute, Oregon Health \& Science University,
Oregon, OR 97201, U.S.A.\\

\vspace*{0.1in}

\end{center}




\begin{quotation}
\small

\noindent {\bf Abstract} Simultaneous variable selection and statistical inference is challenging in high-dimensional data analysis. Most existing post-selection inference methods require explicitly specified regression models, which are often linear, as well as sparsity in the regression model. The performance of such procedures can be poor under either misspecified nonlinear models or a violation of the sparsity assumption. In this paper, we propose a sufficient dimension association (SDA) technique that measures the association between each predictor and the response variable conditioning on other predictors in the high-dimensional setting. Our proposed SDA method requires neither a specific form of regression model nor sparsity in the regression. Alternatively, our method assumes normalized or Gaussian-distributed predictors with a Markov blanket property. We propose an estimator for the SDA and prove asymptotic properties for the estimator. We construct three types of test statistics for the SDA and propose a multiple testing procedure to control the false discovery rate. Extensive simulation studies have been conducted to show the validity and superiority of our SDA method. Gene expression data from the Alzheimer Disease Neuroimaging Initiative are used to demonstrate a real application. 

\vspace*{0.15in}

\noindent{\bf Keywords:} 
Sliced inverse regression; Conditional association; Markov blanket; Asymptotic property; False discovery rate.   
\end{quotation}

\newpage
\setcounter{page}{1}

\section{Introduction}
As we venture further into the era of big data, the proliferation of expansive datasets presents a novel array of analytical complexities. High-dimensionality remains one of the most important complexities, where thousands of predictors are commonly available for only hundreds or even tens of samples. For example, the Alzheimer Disease Neuroimaging Initiative (ADNI) study collects longitudinal clinical, brain imaging, and gene expression data to support Alzheimer's Disease research. The ADNI Gene Expression Profile, a single dataset from ADNI, contains microarray data of 49,386 probes from a total of 745 different individuals. A critical task for such high-dimensional data analysis is to identify important features (e.g., probes) that are associated with the outcome of interest. 

A prevalent strategy for variable selection in high-dimensional data analysis is to use regularization-based regression methods, such as LASSO \citep{tibshirani1996regression}, SCAD \citep{fan2001variable}, MCP \citep{zhang2010nearly}, and many others. These methods generally prescribe explicitly defined regression models and assume sparsity in the regression models. Besides variable selection, post-selection inference has emerged as a significant research direction in the past decade. The goal of post-selection inference is to derive valid statistical inference by accounting for the uncertainty inherent in the selection \citep{kuchibhotla2022post}. For example, \cite{lee2016exact,tibshirani2018uniform,mccloskey2023hybrid} have explored the conditional selective inference approach, where the inference can be made by conditioning on the selection procedure. The efficacy of the conditional selective inference method is contingent upon the performance of the focused selection procedure.

Sufficient dimension reduction (SDR, \cite{cook1998regression}) is a dimension reduction method for variable selection in low-dimensional settings. Under this framework, dimension reduction is achieved by assuming that there exist low-rank subspaces of the original covariate space, or dimension reduction spaces, such that the outcome is independent of the covariates when conditioning on the projection of the covariates onto these subspaces. The sliced inverse regression (SIR) proposed by \cite{li1991sliced} is the most popular approach for SDR. In high-dimensional settings, the sparsity of eigenvectors in dimension reduction spaces is often assumed, with research primarily focusing on consistently estimating the central subspace, i.e., the smallest dimension reduction space  \citep{ni2005note,lin2018consistency,lin2019sparse,lin2021optimality}. However, studies on inference for each covariate in high-dimensional settings are still limited. \cite{zhu2006sliced} investigated the limiting distribution of SIR in fixed dimensions, while \cite{zhao2022testing} extended this to diverging dimensions and introduced a mirror statistic approach for false discovery rate (FDR) control based on data splitting.

This article introduces a novel statistical inference method for high-dimensional settings using SDR. We first examine the necessary conditions for a predictor to be part of the Markov blanket \citep{candes2018panning}, which is the minimal set of variables encapsulating the dependency between outcome and covariates. According to the derived conditions, we propose to make inference and select variables based on a measure named sufficient dimension association (SDA). Utilizing the assumption of multivariate normality and sparsity of the precision matrix, we propose a LASSO-based estimator for the SDA, which can be used to test the significance of each covariate separately. Contrary to most existing SDR methods, our proposed method does not require the central subspace to be consistently estimated. To test each covariate's membership in the Markov blanket, we construct a simple $\chi^2$ statistic and two other statistics based on the Kolmogorov-Smirnov (KS) and Cram{\"e}r-von-Mises (CvM) principles. A multiple testing procedure has been proposed to control the FDR. 

The proposed SDA method does not require any explicitly specified regression model as opposed to most existing post-selection inference methods. This method is practically simple to understand and implement. Despite the many sophisticated variable selection methods that have been developed for high-dimensional data, the concept of univariate association testing is still popular in scientific applications. The SDA enjoys such simplicity as it is merely a (conditional) association measure for each univariate predictor. 

The proposed SDA has a tie to the concept of partial correlation in the literature. A partial correlation refers to the correlation between two random variables after adjusting for the effect of a set of controlling variables \citep{baba2004partial}. Assuming a joint Gaussian distribution for the response and the covariates, \cite{buhlmann2010variable} proposed the partial trustfulness and a PC-simple algorithm. \cite{li2017variable} extended the method to elliptical linear regression models. \cite{alabiso2023high} studied the partial faithfulness for high-dimensional linear mixed-effects models and \cite{liu2018variable} considered variable selection for partial linear models. For variable screening, \cite{xia2021copula} proposed a copula-based partial correlation measure, \cite{lu2020model} considered the conditional distance correlation measure, and \cite{huang2022kernel} developed the kernel partial correlation measure. These existing methods focus on the variable selection consistency and the sure screening property, while statistical inference remains unclear.
On the inference side for high-dimensional linear models, \cite{gong2018efficient} developed a test based on the maximum of a sequence of partial correlations, and \cite{hemerik2021permutation} considered permutation tests based on partial or semipartial correlations. 

The rest of the article is organized as follows. Section \ref{subsec:model} outlines notations and assumptions used in this paper. Section \ref{subsec:VS} introduces the measure for sufficient dimension association. Section \ref{subsec:pc} discusses the relationship between the proposed SDA and the partial correlation measure. Section \ref{subsec:SIR} reviews the sliced inverse regression. Section \ref{subsec:est} presents the proposed estimator for SDA, while Section \ref{subsec:thm} delves into its theoretical properties. Sections \ref{subsec:se} and \ref{subsec:test} introduce standard error estimation and hypothesis testing methods, respectively. The finite sample performance of the proposed method is evaluated in Section \ref{sec:sim} through extensive simulations. In Section \ref{sec:real}, the method is applied to gene expression data from the ADNI study, focusing on identifying genes linked to Alzheimer's disease. We conclude the paper with a thorough discussion in Section \ref{sec:con}.

\section{Sufficient Dimension Association} \label{sec:SDR}
\subsection{Assumptions and notations} \label{subsec:model}
Throughout this article, the superscript $0$ is used to represent the true value of a given parameter. For any vector $\bfa \in \mathbb{R}^n$, $a_i$ denote the $i$-th coordinate of $\bfa$ for any $i \in \{1, \cdots, p\}$, $\bfa(\cfI)$ denote the subvector of $\bfa$ with coordinates of $\cfI$ for any $\cfI \subset \{1, \cdots, p\}$, and $\bfa_{-i}$ is used to denote the subvector of $\bfa$ excluding the $i$-th coordinate. Similarly, for any $n \times p$ matrix $\bfA$, $A_{ji}$ denotes the $(j,i)$-th element of $\bfA$ for any $j \in \{1, \cdots, n\}$ and $i \in \{1, \cdots, p\}$, $\bfA(\cfJ, \cfI)$ denotes the submatrix of $\bfA$ with rows of $\cfJ$ and columns of $\cfI$ for any $\cfJ \subset \{1, \cdots, n\}$ and $\cfI \subset \{1, \cdots, p\}$, and $\bfA_{-j, -i}$ is used to denote the submatrix of $\bfA$ excluding the $j$-th row and the $i$-th column. For any set $\cfS$, $|\cfS|$ denotes the cardinality of $\cfS$.  

Let $\bfX = (X_1, \cdots, X_p)^\top$ be a $p$-dimensional vector of predictors and $Y$ is the response. Then $\bfy = (Y_1, \cdots, Y_n)^\top$ denotes the vector of $n$ response observations and $\bfx = (\bfx_1, \cdots, \bfx_n)^\top$ denotes the corresponding covariate matrix, where $\bfx_j = (X_{j,1}, \cdots, X_{j,p})^\top$ is the covariates vector for subject $j$. We assume that $\bfX$ is normalized with mean zero and covariance matrix $\bfSigma$. We consider the semiparametric model
$Y = f(\bfb_1^\top \bfX, \cdots, \bfb_d^\top \bfX, \epsilon) = f(\bfB^\top \bfX, \epsilon)$, 
where $f(\cdot)$ is an arbitrary unknown function, $\bfb_1, \cdots, \bfb_d$ are unknown vectors, $\bfB = (\bfb_1, \cdots, \bfb_d)$, and $\epsilon$ is independent of $\bfX$ with mean zero. The linear space spanned by $\bfB$, denoted as col($\cfB$), is called a dimension reduction space. The intersection of all dimension reduction spaces, denoted as $\cfS_{Y|\bfX}$, is called the central subspace for the regression of $Y$ on $\bfX$ \citep{li1991sliced,cook1996graphics}. The $\cfS_{Y|\bfX}$ is, by definition, unique and can capture all information of $Y$ given $\bfX$. By assuming an elliptical distribution for $\bfX$ and the {\it linearity condition}:
\begin{description}
    \item[(C1)] Linearity condition: ~$\E(\bfa^\top \bfX | \bfb_1^\top \bfX, \cdots, \bfb_d^\top \bfX)$ is a linear combination of $\bfb_1^\top \bfX, \cdots, \bfb_d^\top \bfX$ for every $\bfa \in \mathbb{R}^p$,
\end{description}
\cite{li1991sliced} showed that $\bfSigma \cfS_{Y|\bfX} = \mbox{col}(\bfLambda)$, where $\bfLambda := \E \{\Var(\bfX | \bfY)\}$. In this article, we further assume that $\bfX$ follows multivariate Gaussian, i.e. $\bfX \sim N(0, \bfTheta^{-1})$, where $\bfTheta = \bfSigma^{-1}$ is the precision matrix of $\bfX$.

For high-dimensional settings, we also assume that the dependence between $Y$ and $\bfX$ can be characterized through a Markov blanket $\bfX(\cfA)$ \citep{candes2018panning}, i.e., 
\be \label{sparse}
Y \indep \bfX | \bfX(\cfA),
\ee
where $\cfA \subset \{ 1, \cdots, p \}$ is the minimal index set satisfying (\ref{sparse}). In this context, we aim to make inference about whether predictor $i$ belongs to the set $\cfA$, which translates to the following hypothesis testing problem: 
\be \label{indep test}
H_0: Y \indep X_i | \bfX_{-i} ~\text{versus}~ H_1: Y \nindep X_i | \bfX_{-i}.
\ee
The connection between the Markov blanket assumption (\ref{sparse}) and the conditional association test (\ref{indep test}) is discussed in the supplemental material.
Unlike other high-dimensional approaches such as knockoffs \citep{barber2015controlling} and selective inference \citep{taylor2018post}, we do not require a sparse regression model, i.e. $|\cfA| \ll p$. Rather, we assume that the precision matrix $\bfTheta$ is sparse, i.e. $I_i = |\cfI_i| \ll p$ where $\cfI_i = \{j: \Theta_{i, j} \neq 0\}$ for every $i \in \cfI$. Note that the predictor $X_j$ is conditionally independent with $X_i$ if $\Theta_{i, j}=0$. The sparsity of conditional dependency in covariates is commonly assumed in the SDR literature, examples include but not limit to \citep{tan2018convex,lin2018consistency,pircalabelu2021graph}. 



Under the linearity condition {\bf (C1)} and letting $\delta(Y) = \bfTheta \E(\bfX | Y)$ \citep[Theorem 3.1]{li1991sliced}, we have $\delta(Y) \in \cfS_{Y|\bfX}$, which also implies that
$\E\{ g(Y) \delta(Y) \} = \bfTheta \Cov\{ \bfX, g(Y) \} \in \cfS_{Y|\bfX}$, 
where $g(\cdot) \in \cfF$ and $\cfF$ is a sequence of transformation functions of $Y$. Thus, with a sequence of transformation functions $\{ g_h(\cdot) \}_{h=1}^H$, let $\bfbeta_{h} := \bfTheta \Cov\{ \bfX, g_h(Y) \}$ for every $h \in \cfH$, where $\cfH = \{ 1, \cdots H \}$, we have Span$(\bfbeta_1, \cdots, \bfbeta_H) \subseteq \cfS_{Y|\bfX}$. Similar to \cite{cook2006using,wu2011asymptotic,zhao2022testing}, we also assume the {\it coverage condition}:
\begin{description}
    \item[(C2)] Coverage condition: $\mbox{Span} (\bfbeta_1, \cdots, \bfbeta_H) = \cfS_{Y|\bfX}$ when $H > d$.
\end{description}

\subsection{Measure of sufficient dimension association} \label{subsec:VS}
To establish an association between each predictor $X_i$ and the outcome variable $Y$ while controlling all other predictors, we start from the dependence structure within $\bfX$. Let $\beta_{hi}$ be the $i$th element of $\bfbeta_h$ and we have
$\beta_{hi} = \bftheta_i^\top \Cov \{\bfX, g_h(Y)\}$, 
where $\bftheta_i$ is the $i$-th column of the precision matrix $\bfTheta$. Due to the property of multivariate Gaussian distribution, the conditional distribution of $X_i$ given all other predictors $\bfX_{-i}$ follows a Gaussian distribution,  
\be \label{conditional}
X_i | \bfX_{-i} \sim N(-\sigma_i^2 \bftheta_{i, -i}^\top \bfX_{-i}, \sigma_i^2),
\ee
where $\sigma_i^2 = \theta_{ii}^{-1} > 0$ and $\bftheta_{i,-i}$ denotes $\bftheta_i$ excluding the $i$th element. Thus, let $\bfzeta_i = -\sigma_i^2 \bftheta_{i,-i}$ and then we can write (\ref{conditional}) as a linear regression
\be \label{lm} 
X_i = \bfzeta_i^\top \bfX_{-i} + Z_i,
\ee
where $Z_i \sim N(0, \sigma_i^2)$. Since we have assumed that the precision matrix $\bfTheta$ is sparse, consequently, the induced $\bftheta_{i, -i}$ and $\bfzeta_i$ are also sparse, with $\theta_{ij} = \zeta_{ij} = 0$ if $j \notin \cfI_i$. 

Recall the assumption that the dependence between $Y$ and $\bfX$ is determined by a Markov blanket (\ref{sparse}). The problem is to identify the Markov blanket $\bfX(\cfA)$ and make inference for each individual predictor $X_i$. The following proposition reveals the link between the sufficient dimension method and the membership of $X_i$ to $\bfX(\cfA)$.
\begin{proposition} \label{prop1}
Assume that conditions (C1) and (C2) hold. Then $i \in \cfA^c$ if $\Cov(Z_i, g_h(Y)) = 0$ for all $h \in \cfH$, and $i \in \cfA$ if there exists $h \in \cfH$ such that $\Cov(Z_i, g_h(Y)) \ne 0$.
\end{proposition}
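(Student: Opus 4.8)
My plan is to exploit the fact that the two displayed implications are logically one equivalence. Write $P$ for the statement ``$\Cov(Z_i, g_h(Y)) = 0$ for all $h \in \cfH$'', and recall from the setup that membership $i \in \cfA$ is identified with the alternative $Y \nindep X_i \mid \bfX_{-i}$ in (\ref{indep test}). Then the first assertion is $P \Rightarrow \{Y \indep X_i \mid \bfX_{-i}\}$ and the second is its converse, so it suffices to prove $P \iff \{Y \indep X_i \mid \bfX_{-i}\}$. The bridge between the SIR quantities $\beta_{hi}$ and the residual covariances is an algebraic identity: combining the definition (\ref{estimand}) with the Gaussian conditional law (\ref{conditional})--(\ref{lm}) and using $\sigma_i^2 = \theta_{ii}^{-1}$, a direct computation gives
\[
\beta_{hi} \;=\; \theta_{ii}\,\Cov\!\big(Z_i,\, g_h(Y)\big),
\]
so, since $\theta_{ii}>0$, the statement $P$ holds if and only if $\beta_{hi} = 0$ for every $h$.

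For the direction $\{Y \indep X_i \mid \bfX_{-i}\} \Rightarrow P$ (which yields the second assertion in contrapositive form), I would argue purely probabilistically, not even invoking (C1) and (C2). The representation (\ref{lm}) exhibits $Z_i = X_i - \bfzeta_i^\top \bfX_{-i}$ as the population residual of $X_i$ on $\bfX_{-i}$, so $\E(Z_i \mid \bfX_{-i}) = 0$. Conditionally on $\bfX_{-i}$, $Z_i$ is an invertible affine transformation of $X_i$, hence $Y \indep X_i \mid \bfX_{-i}$ transfers to $Y \indep Z_i \mid \bfX_{-i}$. Factorizing the conditional expectation then gives $\E(Z_i\, g_h(Y) \mid \bfX_{-i}) = \E(Z_i \mid \bfX_{-i})\,\E(g_h(Y)\mid \bfX_{-i}) = 0$, and taking expectations (with $\E Z_i = 0$) yields $\Cov(Z_i, g_h(Y)) = 0$ for all $h$, i.e. $P$.

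For the reverse direction $P \Rightarrow \{Y \indep X_i \mid \bfX_{-i}\}$ (the first assertion), the structural assumptions enter. From the identity, $P$ forces $\beta_{hi} = 0$ for all $h$, i.e. the $i$-th coordinate of each $\bfbeta_h$ vanishes. Condition (C2) says the $\bfbeta_h$ span $\cfS_{Y|\bfX}$, so the $i$-th coordinate vanishes on the whole central subspace; equivalently $e_i \perp \cfS_{Y|\bfX}$, where $e_i$ is the $i$-th coordinate vector. Choosing a basis $\bfB$ of $\cfS_{Y|\bfX}$, each column then has zero $i$-th entry, so $\bfB^\top \bfX$ is a function of $\bfX_{-i}$ alone. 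The defining conditional independence of the central subspace, $Y \indep \bfX \mid \bfB^\top \bfX$, combined with the fact that the conditioning statistic $\bfB^\top\bfX$ is measurable with respect to $\bfX_{-i}$, then upgrades to $Y \indep X_i \mid \bfX_{-i}$ by the standard argument that enlarging the conditioning set by variables that already determine the sufficient statistic preserves the independence.

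The main obstacle is this last step: turning the linear-algebraic statement $e_i \perp \cfS_{Y|\bfX}$ into the conditional-independence conclusion rigorously. Two points require care. First, one must invoke that the central subspace is itself a dimension-reduction space, so that the model (\ref{model}) may be written with $\bfB$ a basis of $\cfS_{Y|\bfX}$ and $\epsilon \indep \bfX$; this is standard under (\ref{model}) but should be stated. Second, the enlargement argument must be justified from $Y \indep (X_i, \bfX_{-i}) \mid \bfB^\top\bfX$ together with the fact that the $\sigma$-field generated by $\bfB^\top\bfX$ is contained in that generated by $\bfX_{-i}$, which in turn rests on the intersection property, guaranteed here because the Gaussian law of $\bfX$ has a strictly positive density. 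That same positivity is what legitimizes the identification $i \in \cfA \iff Y \nindep X_i \mid \bfX_{-i}$ of the minimal Markov blanket, closing the loop between the covariance condition and set membership.
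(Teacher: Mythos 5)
Your proposal is correct, and its computational core --- the identity $\beta_{hi} = \theta_{ii}\,\Cov(Z_i, g_h(Y))$, equivalently $\nu_{hi} = \sigma_i^2 \beta_{hi} = \Cov(Z_i, g_h(Y))$ --- is exactly the identity the paper derives via $\bftheta_i^\top \bfX = \sigma_i^{-2} Z_i$. Where you genuinely go beyond the paper is in what surrounds that identity. The paper's entire remaining argument is the one-sentence assertion that the coverage condition implies ``$i \in \cfA^c$ if and only if $\beta_{1i}^0 = \cdots = \beta_{Hi}^0 = 0$,'' with no proof given. You actually prove both directions of this equivalence: for ``$i \in \cfA^c$ implies all covariances vanish'' you give a purely probabilistic argument (conditional independence transfers from $X_i$ to $Z_i$ given $\bfX_{-i}$, then factorize and use $\E(Z_i \mid \bfX_{-i}) = 0$) that needs neither (C1) nor (C2); for the converse you use (C2) to translate $\beta_{hi}=0$ for all $h$ into the geometric statement that every vector of $\cfS_{Y|\bfX}$ has vanishing $i$-th coordinate, choose a basis $\bfB$ with zero $i$-th row, and conclude $Y \indep X_i \mid \bfX_{-i}$ because $\bfB^\top \bfX$ is a function of $\bfX_{-i}$. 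This buys a complete proof and makes visible which assumption each direction actually uses, something the paper's proof leaves opaque.

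One small correction to your discussion of the ``main obstacle.'' The enlargement step --- from $Y \indep \bfX \mid \bfB^\top\bfX$ together with $\sigma(\bfB^\top\bfX) \subseteq \sigma(\bfX_{-i})$ to $Y \indep X_i \mid \bfX_{-i}$ --- does not rest on the intersection property. It is the weak-union/tower-property implication, valid for any distribution: for bounded measurable $f$, $\E[f(Y)\mid \bfX_{-i}] = \E\bigl[\E[f(Y)\mid \bfX] \mid \bfX_{-i}\bigr] = \E\bigl[\E[f(Y)\mid \bfB^\top\bfX] \mid \bfX_{-i}\bigr] = \E[f(Y)\mid \bfB^\top\bfX] = \E[f(Y)\mid \bfX]$, so $\E[f(Y)\mid\bfX]$ is $\bfX_{-i}$-measurable and the conditional independence follows. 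Where positivity of the Gaussian density genuinely enters is exactly where you place it at the end: it supplies the intersection property, which makes the minimal Markov blanket well defined and justifies the identification $i \in \cfA \iff Y \nindep X_i \mid \bfX_{-i}$ --- a hypothesis that both your argument and the paper's unproven assertion rely on.
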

\begin{myproof}
The coverage condition suggests that all values in $\cfS_{Y|\bfX}$ can be written as a linear combination of $\bfbeta_1, \cdots, \bfbeta_H$, which implies that $i \in \cfA^c$ if and only if $\beta_{1i}^0 = \cdots = \beta_{Hi}^0 = 0$. Since $\sigma_i^2 > 0$, the scaled parameter $\bfnu_i := \sigma_i^2 \bfbeta_i$ inherits the property of $\bfbeta_i$. From (\ref{conditional}) and (\ref{lm}), denote $\nu_{hi}$ the $h$th component of $\bfnu_i$, we have
\ba
\nu_{hi} = \E(\sigma_i^2 \bftheta_i^\top \bfX g_h(Y)) = \E(Z_i g_h(Y)) = \Cov(Z_i, g_h(Y)). 
\ea
This is because $\bftheta_i^\top \bfX = \theta_{ii} X_i + \bftheta_{i,-i}^\top \bfX_{-i} = \sigma_i^{-2} (X_i + \sigma_i^2  \bftheta_{i,-i}^\top \bfX_{-i}) = \sigma_i^{-2} (X_i - \bfzeta_i^\top \bfX_{-i}) = \sigma_i^{-2} Z_i$. The proof is complete. 
\end{myproof}
Proposition \ref{prop1} suggests that to test whether $X_i$ belongs to the Markov blanket $\bfX(\cfA)$, i.e. the conditional dependence between $Y$ and $X_i$ given $\bfX_{-i}$, we can test the marginal association between $Y$ and $Z_i$ through the covariance $\Cov(Z_i, g_h(Y))$. 

We introduce the concept {\it sufficient dimension association}, as defined in the proof of Proposition \ref{prop1}: $\nu_{hi}= \Cov(Z_i, g_h(Y))$, for a sequence of transformation functions $g_h(\cdot)$, $h\in \cfH$. The SDA sequence of $\nu_{hi}$ is a measure of conditional association between an individual predictor $X_i$ and the outcome $Y$ given all other predictors. For this association measure, we make no assumption about the regression function $f$. However, we need to test a sequence of $H$ hypotheses, i.e. $H_0: \Cov(Z_i, g_h(Y)) = 0$ vs. $H_1: \Cov(Z_i, g_h(Y)) \ne 0$, to satisfy the coverage condition. As a special case, for linear regression models, the SDA can reduce to the correlation between $Z_i$ and $Y$. 

\subsection{Relationship to the partial correlation} \label{subsec:pc}
The SDA measure is related to the partial or semipartial correlation measures \citep{cohen2013applied}. The partial correlation is defined as 
\be \label{pC}
\rho_{Y X_i \cdot \bfX_{-i}} = \Corr\{Y - \E(Y | \bfX_{-i}), X_i - \E(X_i | \bfX_{-i})\},
\ee
and the semipartial correlation is defined as
\be \label{spC}
\rho_{Y (X_i \cdot \bfX_{-i})} = \Corr\{Y, X_i - \E(X_i | \bfX_{-i})\},
\ee
for which linear models are typically assumed for the conditional expectations $\E(Y | \bfX_{-i})$ and $\E(X_i | \bfX_{-i})$. The partial correlation is often used to measure the conditional dependence for Gaussian linear models, where the multivariate Gaussian assumption implies linearity for $\E(Y | \bfX_{-i})$ and $\E(X_i | \bfX_{-i})$. However, when the model is nonlinear, the partial correlation disagrees with the conditional correlation in general, known as the inconsistency \citep{baba2004partial,vargha2013interpretation}.

The inconsistency in nonlinear models suggests that a single linear measure is insufficient to capture the nonlinear conditional dependence between $Y$ and $X_i$. To address this issue, existing approaches relax or modify the linearity assumptions on $\E(Y | \bfX_{-i})$ and $\E(X_i | \bfX_{-i})$ \citep{huang2010testing,shah2020hardness}, or replace the Pearson’s correlation in (\ref{pC}) with nonlinear alternatives, such as rank-based \citep{xia2021copula}, distance-based \citep{wang2015conditional,lu2020model}, or kernel-based \citep{huang2022kernel} correlation measures. Indeed, \cite{shah2020hardness} argues that there does not exist a uniformly valid conditional independence test for all problems. Methods relying on the nonparametric regression \citep{huang2010testing,wang2015conditional} are not feasible in high-dimensional settings. Semiparametric approaches, such as \cite{huang2022kernel}, are more scalable for high-dimensional variable selection, but valid inference remains challenging. Methods using parametric approaches to adjust for the confounding effect $\bfX_{-i}$, such as \cite{xia2021copula}, may still be vulnerable to model misspecification.

Our proposed SDA measure has a unique advantage that the nonlinear relationship between $Y$ and $\bfX_{-i}$ can be unspecified while requiring that $\bfX$ be multivariate Gaussian, which is equivalent to a linear model assumption for $\E(X_i | \bfX_{-i})$. The key to address the inconsistency and capture the nonlinear association between $Y$ and $Z_i$ is to use a sequence of covariance measures $\Cov(Z_i, g_h(Y))$, which is distinct from existing approaches in the literature. Utilizing the theory of SDR, our proposed method is flexible for a wide class of models in high-dimensional settings. In a later section, we show that our testing procedure only requires fitting a single high-dimensional linear model, which is more computationally efficient than some tests based on high-dimensional partial correlation measures, for instance, the permutation test proposed by \cite{hemerik2021permutation}. Lastly, we note that for linear regression models, with $H = 1$ and $g_1(Y) = Y$, the SDA is equivalent to the semipartial correlation specified in (\ref{spC}). 

\subsection{Sliced inverse regression} \label{subsec:SIR}
Different choices of $\{ g_h(\cdot) \}_{h=1}^H$ have been proposed in the literature, examples of which can be found in \cite{yin2002dimension,cook2006using,wu2011asymptotic}. In this article, we focus on the sliced inverse regression (SIR) method proposed by \cite{li1991sliced}, where the response variable $Y$ is discretized into $H$ slices, i.e., $g_h(y) = I(y \in \cfJ_h)$, where $\cfJ_h$ is the set of all possible values for the $h$th slice. When $Y$ is a categorical random variable or only takes on a few values, each category or each unique value naturally defines a slice. When $Y$ is a continuous variable, the range of $Y$ is divided into $H$ slices based on a non-decreasing sequence $\{ a_h \}_{h=0}^H$ with $a_0 \le \min(Y)$, $a_H \ge \max(Y)$, and $\cfJ_h = \{ y: y \in (a_{h-1}, a_h) \}$. 

\section{Statistical Inference} \label{sec:inference}
\subsection{Target of estimation} \label{subsec:est}
We are interested in estimating the SDA sequence $\{\nu_{hi}, h\in\cfH\}$ and testing the sequence of hypotheses for SDA. Using the SIR technique, $\nu_{hi}$ can be expressed as $\nu_{hi} = \Cov\{ Z_i, I(Y \in \cfJ_h)\} = \E \{ I(Y \in \cfJ_h) Z_i \}$.
We propose an estimator for the SDA in the following form
\ba
\hat{\nu}_{hi} = \frac{1}{n} \sum_{j=1}^n I(Y_j \in \cfJ_h) (X_{i,j} - \hat{\bfzeta}_i^\top \bfX_{-i, j}),
\ea
where $\hat{\bfzeta}_i$ is an estimator of $\bfzeta_i$ from the linear model (\ref{lm}).

\subsection{Theoretical properties} \label{subsec:thm}
We first derive the necessary condition that $\hat{\bfnu}_i$, the vector of $\hat{\nu}_{hi}$, can be an asymptotic linear estimator (ALE).
\begin{lemma} \label{lemma1}
If $\lVert \hat{\bfzeta}_i - \bfzeta_i^{0} \rVert_1 = o_p( (H^2 \log p)^{-1/2} )$, then we have
\be \label{ALE}
\sqrt{n}(\hat{\bfnu}_{i} - \bfnu_{i}^0) - \frac{1}{\sqrt{n}} \sum_{j=1}^n \bfpsi_{i}(\bfw_j) = o_p(1),
\ee
where $\bfpsi_i(\bfW_j) = (\psi_{1i}(\bfW_j), \cdots, \psi_{Hi}(\bfW_j))^\top$ and $\psi_{hi} (\bfW) = I(Y \in \cfJ_h) (X_i - \bfzeta_i^{0,\top} \bfX_{-i}) - \nu_{hi}^0$, where $\bfW = \{ \bfX, Y \}$.
\end{lemma}
Lemma \ref{lemma1} requires that the $l_1$-norm of $\hat{\bfzeta}_i - \bfzeta_i^0$ converges faster than the order of $(H^2 \log(p))^{-1/2}$ as $p\goto\infty$. A proof of this lemma is provided in the supplemental material. 

Due to the sparsity assumption of $\bfzeta_i^0$, we consider the LASSO estimator \citep{tibshirani1996regression}, which minimizes the following penalized least squares:
\be \label{lasso}
\hat{\bfzeta}_i = \underset{\bfzeta_i \in \mathbb{R}^{p-1}}{\arg\min} \frac{1}{2n} \lVert \bfx_i - \bfzeta_i^\top \bfx_{-i} \rVert_2^2 + \lambda_i \lVert \bfzeta_i \rVert_1,
\ee
where $\lambda_i \ge 0$ is the tuning parameter. Theoretical properties of LASSO estimators have been extensively studied in the literature \citep{greenshtein2004persistence,meinshausen2006high,buhlmann2011statistics,bickel2009simultaneous}. By assuming the {\it restricted eigenvalue (RE) condition} on the design matrix $\bfx_{-i}$:
\begin{description}
    \item[(C3)] Restricted eigenvalue condition: Let $\cfC_a(\cfI) \subset \mathbb{R}^{p-1}$ be a set defined as $\cfC_a(\cfI) = \{ \bfb \in \mathbb{R}^{p-1}: \lVert \bfb(\cfI^c) \rVert_1 \le a \lVert \bfb(\cfI)\rVert_1 \}$, where $a > 0$. Then $\bfx_{-i}$ satisfies the restricted eigenvalue condition for $\kappa>0$ if $n^{-1} \lVert \bfb^\top \bfx_{-i} \rVert_2^2 \ge \kappa \lVert \bfb \rVert_2^2$, for every $\bfb \in \cfC_a(\cfI)$.
\end{description}
we have the following bound for the $l_1$-norm of $\hat{\bfzeta}_i - \bfzeta_i^0$:
\begin{lemma} \label{lemma2}
Under the RE condition (C3), when the tuning parameter $\lambda_i$ satisfies
\be \label{tuning0}
\lambda_i \ge \sqrt{\frac{C \sigma_i^2 \{ \log(p - 1) + \log(\delta) \}}{n}},
\ee
where $C>0$ and $\delta \rightarrow \infty$ as $n \rightarrow \infty$,
we have
\be \label{lasso error}
\lVert \hat{\bfzeta}_i - \bfzeta_i^{0} \rVert_1 = O_p(\lambda_i I_i).
\ee
\end{lemma}
Lemma \ref{lemma2} is a well-known result for the LASSO estimator \citep{bickel2009simultaneous,buhlmann2011statistics}. For the completion of our theoretical development, a proof of Lemma \ref{lemma2} is provided in the supplemental material. The error bound (\ref{lasso error}) relies on the RE condition, which is commonly assumed in the literature \citep{bickel2009simultaneous,wainwright2019high}. Intuitively, the condition regulates the Gram matrix $\bfx_{-i}^\top \bfx_{-i}/n$ so that the loss function would not be too flat at its minimizer. Although the RE condition is hard to verify in practice due to the unknown $I_i$, \cite{raskutti2010restricted} proved that the RE condition holds with a high probability for a broad class of Gaussian design matrices when the sample size satisfies $n = \Omega_p(I_i \log(p))$.

According to (\ref{lasso error}) and Lemma \ref{lemma1}, we assume the following regularity conditions:
\begin{description}
    \item[(C4)] The number of predictors $p$ satisfies $\lim_{n \rightarrow \infty} \log(p)/n \rightarrow 0$.
    \item[(C5)] The tuning parameter $\lambda_i$ satisfies (\ref{tuning0}) with $\delta \ll p$ and $I_i = o\left( \sqrt{n} (H \log(p))^{-1} \right)$.
\end{description}
Here, (C4) requires $p \ll e^n$ and (C5) imposes the sparsity condition. Notice that for a fixed $H$, the sparsity level becomes $o\left( \sqrt{n} / \log(p) \right)$, which is commonly assumed in the literature \citep{bickel2009simultaneous}. However, in more general settings, where $H$ is allowed to diverge, a more stringent assumption on the sparsity level is required. A direct corollary of Lemmas \ref{lemma1} and \ref{lemma2} shows that $\hat{\bfnu}_i$, under regularity conditions, is an ALE when the LASSO estimator is used.
\begin{corollary} \label{cor1}
Under regularity conditions (C3)-(C5), $\hat{\bfnu}_i$ is an asymptotic linear estimator when $\hat{\bfzeta}_i$ is the LASSO estimator of $\bfzeta_i$.
\end{corollary}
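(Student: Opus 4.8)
The plan is to compose the two preceding lemmas, since the corollary asserts nothing more than that the hypothesis of Lemma \ref{lemma1} is met by the LASSO estimator under (C3)--(C4). Concretely, I would show that the regularity conditions drive the upper bound of Lemma \ref{lemma2} to zero, which verifies the premise (\ref{predict}) of Lemma \ref{lemma1} and therefore delivers the asymptotic linear expansion (\ref{ALE}).

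First I would check the applicability of Lemma \ref{lemma2}. Its bound (\ref{lasso error}) is valid only when the radius of the $\ell_1$-constraint satisfies $M_i \ge M_i^0 = \lVert \bfzeta_i^0 \rVert_1$; this ordering is exactly what (C4) imposes on the tuning parameter, so the LASSO estimator in (\ref{constrant}) inherits the prediction-error bound. Next, I would argue that this bound vanishes. Writing the right-hand side of (\ref{lasso error}) as $(2\sqrt{2}\, L \sigma_i)\, M_i \sqrt{\log(p-1)/n}$, the leading factor is $O(1)$ because $L$ is the absolute constant from Lemma \ref{lemma2} and $\sigma_i = \theta_{ii}^{-1/2}$ is a fixed conditional standard deviation; meanwhile $M_i \sqrt{\log(p-1)/n} \le M_i \sqrt{\log(p)/n} \to 0$ by (C4). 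Since the in-sample prediction error on the left of (\ref{lasso error}) is nonnegative, it is squeezed to zero, establishing (\ref{predict}).

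With (\ref{predict}) in hand, Lemma \ref{lemma1} immediately gives the expansion (\ref{ALE}) with influence functions $\bfpsi_i(\bfW_j)$, which is precisely the definition of an asymptotic linear estimator; this completes the proof. Because the corollary is a straightforward chaining of two established results, there is no genuine analytic obstacle. The only points requiring care are bookkeeping: verifying that the constraint ordering $M_i \ge M_i^0$ is guaranteed by (C4) rather than merely assumed, and confirming that the nuisance constants $L$ and $\sigma_i$ remain bounded as $p$ and $n$ grow, so that the rate in (C4) alone controls the limit. Condition (C3) then plays the auxiliary role of ensuring that $p$ grows subexponentially, which is consistent with, and in the nondegenerate case implied by, (C4).
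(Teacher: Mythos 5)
Your proposal is correct and follows exactly the route the paper intends: the paper presents Corollary \ref{cor1} as a ``direct corollary'' of Lemmas \ref{lemma1} and \ref{lemma2}, i.e.\ (C4) guarantees $M_i \ge M_i^0$ so the bound (\ref{lasso error}) applies, (C3)--(C4) drive that bound to zero, and this verifies the premise (\ref{predict}) of Lemma \ref{lemma1}, yielding the asymptotic linear expansion (\ref{ALE}). Your additional bookkeeping (boundedness of $L$ and $\sigma_i$, and the relation between (C3) and (C4)) is a sound filling-in of details the paper leaves implicit.
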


We then study the asymptotic distribution of $n^{-1/2} \sum_{j=1}^n \bfpsi_{i}(\bfw_j)$. For the high-dimensional setting, it is reasonable to assume that $d$, the rank of the central subspace $\cfS_{Y | \bfX}$, also increases with the sample size $n$. Thus, to satisfy the coverage condition (C2), the number of slices $H$ should also increase with $n$. We impose the following additional regularity conditions for the moments of $\bfnu_i$:
\begin{description}
    \item[(C6)] Let $p_h = P(Y \in \cfJ_h)$, there exist positive constants $\gamma_1 \le 1 \le \gamma_2$ such that the probability $p_h$ satisfies
    \ba
    \frac{\gamma_1}{H} \le p_h \le \frac{\gamma_2}{H} ~\text{for every $h \in \cfH$}.
    \ea
    \item[(C7)] There exist positive constants $\gamma_3$ and $\gamma_4$ such that, for every $h \in \cfH$, we have
    \ba
    |\E(Z_i | Y \in \cfJ_h)| < \gamma_3 H, ~ \E(Z_i^2 | Y \in \cfJ_h) < \gamma_4 H.
    \ea
\end{description}
Define $\bfOmega_i = \E\left[ \bfpsi_i(\bfW) \bfpsi_i^\top(\bfW) \right]$, where $\Omega_i(h,h) = p_h \E(Z_i^2 | Y \in \cfJ_h) - p_h^2 [\E(Z_i | Y \in \cfJ_h)]^2$ and $\Omega_i(h_1,h_2) = -p_{h_1} p_{h_2} \E(Z_i | Y \in \cfJ_{h_1}) \E(Z_i | Y \in \cfJ_{h_2})$,
for every $h, h_1, h_2 \in \cfH$. We first derive an upper bound for the approximating error of the multivariate Gaussian distribution to the distribution of $n^{-1/2} \sum_{j=1}^n \bfpsi_{i}(\bfw_j)$.

\begin{lemma} \label{lemma3}
Under regularity conditions (C6) and (C7), if $\bfOmega_i$ is invertible, there exists a constant $C$ such that
\ba
\sup_{A \in \cfC_H} \left| P\left( \frac{1}{\sqrt{n}} \sum_{j=1}^n \bfpsi_{i}(\bfw_j) \in A \right) - P\left( N(0, \bfOmega_i^0) \in A \right) \right| \le C \frac{H^{1/4}}{n^{1/2}},
\ea
where $\cfC_H$ is defined as the set of all convex subsets of $\mathbb{R}^H$.
\end{lemma}
Lemma \ref{lemma3} is a direct consequence of the multidimensional Berry-Esseen Bound \citep{bentkus2005lyapunov}, and a detailed proof is provided in the supplemental material. Finally, combining Corollary \ref{cor1} and Lemma \ref{lemma3}, we have proved the following theorem.

\begin{theorem} \label{thm1}
Under regularity conditions (C3)-(C7), if the number of slices $H$ satisfies $\lim_{n \rightarrow \infty} H n^{-2} \rightarrow 0$, and $\bfOmega_i$ is invertible, then
$\sqrt{n} (\hat{\bfnu}_{i} - \bfnu_{i}^0) \xrightarrow{d} N(0, \bfOmega_i^0).$
\end{theorem}

In ultra-high dimensional settings, where $\log(p) = O(n^a)$ for some $a \in (0, 1-2\kappa)$ and $\kappa \ge 0$, the average in-sample prediction error of the LASSO estimator (\ref{lasso}) will diverge as $n \rightarrow \infty$. Therefore, a sure independence screening (SIS) procedure, as proposed by \cite{fan2008sis}, should be employed to reduce the dimensionality to a moderate scale, to satisfy condition (C4). Specifically, for a given $\gamma \in (0, 1)$, we rank the absolute values of the pairwise Pearson correlation coefficients between $X_i$ and each $X_j$, i.e. $\hat{\rho}_{ij} = \widehat{\Corr}(X_i, X_j)$, in a decreasing order, and then select a subset of variables defined as
\ba
\cfM_{i \gamma} = \{ j \in \{1, \cdots, i-1, i+1, \cdots, p\}: \text{$\lfloor \gamma (n-1) \rfloor$ top-ranked $|\hat{\rho}_{ij}|$} \},
\ea
where $\lfloor \cdot \rfloor$ is the floor function. Under the following additional regularity conditions:
\begin{description}
    \item[(C8)] For some $\kappa \ge 0$ and $c_1, c_2 > 0$,
    \ba
    \min_{j \in \cfI_i} |\zeta_{ij}| \ge \frac{c_1}{n^\kappa} ~\text{and}~ \min_{j \in \cfI_i} |\Cov(\zeta_{ij}^{-1} X_i, X_j)| \ge c_2.
    \ea
    \item[(C9)] There exist $0 \le \tau < 1-2\kappa$ and $c_3 > 0$ such that $\lambda_{\max} (\bfSigma_{-i,-i}) \le c_3 n^\tau$.
\end{description}
\cite{fan2008sis} demonstrate that when $\gamma$ is chosen such that $\lfloor \gamma (n-1) \rfloor = O(n^{1-\upsilon})$ with $\upsilon < 1 - 2\kappa - \tau$, we have $P(\cfI_i \subset \cfM_{i \gamma}) \rightarrow 1 ~\text{as}~ n \rightarrow \infty$. Thus, the theoretical properties proved in this section remain valid in ultra-high dimensional settings when substituting $\bfX_{-i}$ with $\bfX(\cfM_{i \gamma})$ in (\ref{lm}). The first part of condition (C8) imposes a lower bound on the nonzero coefficients in model (\ref{lm}), which ensures that the probability of $\cfI_i \not\subset \cfM_{i\gamma}$ converges to zero. The second part of (C8) excludes the scenario where a covariate $X_j$ is marginally uncorrelated but conditionally correlated with $X_i$. Condition (C9) requires that the covariates are not excessively correlated. Together with the RE condition in (C3) and the sparsity condition in (C5), the first part of (C8) and condition (C9) are typically satisfied in practice. However, if the second part of (C8) is violated, the iterative SIS (ISIS) procedure proposed by \cite{fan2008sis} can be used as an alternative. 

\subsection{Standard error estimation} \label{subsec:se}
As shown in Theorem \ref{thm1}, the asymptotic variance of $\sqrt{n} (\hat{\bfnu}_{i} - \bfnu_{i}^0)$ is given by $\bfOmega_i$, which is defined as $\E\left[ \bfpsi_i(\bfW) \bfpsi_i^\top(\bfW) \right]$. Denote $\hat{\bfpsi}_i(\bfW) = (\hat{\psi}_{1i} (\bfW), \cdots, \hat{\psi}_{Hi} (\bfW))^\top$ to be the estimated influence vector, where $\hat{\psi}_{hi}(\bfW) = I(Y_i \in \cfJ_h) (X_i - \hat{\bfzeta}_i^\top \bfX_{-i}) - \hat{\nu}_{hi}$. We consider the following sample mean estimator:
\ba
\hat{\bfOmega}_i = \frac{1}{n} \sum_{j=1}^n \hat{\bfpsi}_i(\bfW_j) \hat{\bfpsi}_i^\top(\bfW_j). 
\ea
The consistency of $\hat{\bfOmega}_i$ is given in Theorem \ref{thm2}, and a detailed proof is provided in the supplemental material.
\begin{theorem} \label{thm2}
Under regularity conditions (C3)-(C7), we have $\hat{\bfOmega}_i \xrightarrow{P} \bfOmega_i^0$.
\end{theorem}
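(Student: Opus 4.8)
The plan is to compare $\hat{\bfOmega}_i$ with an \emph{oracle} empirical second-moment matrix and then invoke a law of large numbers together with a perturbation bound. Define $\tilde{\bfOmega}_i = \frac{1}{n}\sum_{j=1}^n \bfpsi_i(\bfW_j)\bfpsi_i^\top(\bfW_j)$, the analogue of $\hat{\bfOmega}_i$ built from the true influence functions $\bfpsi_i$, i.e.\ using $\bfzeta_i^0$ and $\nu_{hi}^0$ in place of $\hat{\bfzeta}_i$ and $\hat{\nu}_{hi}$. I would then split $\hat{\bfOmega}_i - \bfOmega_i^0 = (\tilde{\bfOmega}_i - \bfOmega_i^0) + (\hat{\bfOmega}_i - \tilde{\bfOmega}_i)$ and show that each bracket vanishes in probability entrywise, which is the sense of convergence that feeds the standard-error and test-statistic formulas of Sections~\ref{subsec:se}--\ref{subsec:test}.

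For the oracle term, fix $h_1,h_2$ and note that $\tilde{\Omega}_i(h_1,h_2)$ is an i.i.d.\ average with mean $\Omega_i^0(h_1,h_2)$, so by Chebyshev's inequality it suffices to bound its variance by $n^{-1}\E[\psi_{h_1 i}^2(\bfW)\psi_{h_2 i}^2(\bfW)]$. I would use the disjointness of slices, $I(Y\in\cfJ_{h_1})I(Y\in\cfJ_{h_2})=0$ for $h_1\ne h_2$, to collapse the products, and then bound the conditional moments of $Z_i$ within each slice through (C5)--(C6), namely $p_h\asymp H^{-1}$, $|\E(Z_i\mid Y\in\cfJ_h)|<\gamma_3 H$ and $\E(Z_i^2\mid Y\in\cfJ_h)<\gamma_4 H$. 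This makes the per-entry variance of order $\mathrm{poly}(H)/n$, which vanishes under the slice rate $H=o(n^{2/5})$ inherited from Theorem~\ref{thm1}, and aggregating over the entries gives $\tilde{\bfOmega}_i - \bfOmega_i^0 = o_p(1)$.

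For the perturbation term, write the in-sample residual $r_{i,j} := -(\hat{\bfzeta}_i-\bfzeta_i^0)^\top\bfX_{-i,j}$ and $\Delta_{hi} := \hat{\nu}_{hi}-\nu_{hi}^0$, so that, writing $\psi_{hi,j}:=\psi_{hi}(\bfW_j)$, the difference $\hat{\psi}_{hi}(\bfW_j)-\psi_{hi,j} = I(Y_j\in\cfJ_h)r_{i,j}-\Delta_{hi}=:d_{hi,j}$. Expanding bilinearly, $\hat{\Omega}_i(h_1,h_2)-\tilde{\Omega}_i(h_1,h_2) = \frac{1}{n}\sum_{j}(\psi_{h_1 i,j}\,d_{h_2 i,j}+d_{h_1 i,j}\,\psi_{h_2 i,j}+d_{h_1 i,j}\,d_{h_2 i,j})$, and I would control each piece by Cauchy--Schwarz through the quantities $\frac{1}{n}\sum_j\psi_{hi,j}^2$, $\frac{1}{n}\sum_j r_{i,j}^2$ and $\Delta_{hi}^2$, after using $\frac1n\sum_j d_{hi,j}^2 \le 2\frac1n\sum_j r_{i,j}^2 + 2\Delta_{hi}^2$. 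Lemma~\ref{lemma2} with (C4) and Markov's inequality gives $\frac{1}{n}\sum_j r_{i,j}^2\xrightarrow{P}0$; the oracle law of large numbers gives $\frac{1}{n}\sum_j\psi_{hi,j}^2=O_p(1)$ (it converges to $\Omega_i^0(h,h)=O(1)$); and the asymptotic linearity of Lemma~\ref{lemma1} with $\mathrm{tr}(\bfOmega_i^0)=O(H)$ yields $\E\lVert\hat{\bfnu}_i-\bfnu_i^0\rVert_2^2 = O(H/n)$, so that $\sum_h\Delta_{hi}^2\xrightarrow{P}0$ and in particular $\max_h|\Delta_{hi}|\xrightarrow{P}0$. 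Together these force $\hat{\bfOmega}_i - \tilde{\bfOmega}_i = o_p(1)$.

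The hard part will be the uniform control over the diverging number of slices $H$: both the oracle variance bounds and the perturbation cross-terms must hold simultaneously across the $H$ (or $H^2$) coordinates rather than for a single fixed entry, which requires careful bookkeeping of the $H$-dependence carried by (C5)--(C6) and leaning on $H=o(n^{2/5})$. A secondary technical point is that the fourth conditional moments $\E(Z_i^4\mid Y\in\cfJ_h)$ needed for the oracle variance bound are not directly supplied by (C6); I would control these either through a fourth-moment strengthening of (C6) or by exploiting the marginal Gaussianity $Z_i\sim N(0,\sigma_i^2)$.
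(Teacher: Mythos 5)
Your proposal is correct and follows the same skeleton as the paper's proof: both introduce an oracle matrix $\tilde{\bfOmega}_i$ built from the true $\bfzeta_i^0$ (the paper centers it at the oracle sample mean $\tilde{\bfnu}_i$ rather than at $\bfnu_i^0$, an immaterial difference), split $\hat{\bfOmega}_i - \bfOmega_i^0$ into an oracle deviation plus a perturbation term, and kill the perturbation term via Cauchy--Schwarz together with the LASSO in-sample prediction bound of Lemma~\ref{lemma2} under (C4), exactly as you do with your $r_{i,j}$ and $\Delta_{hi}$ bookkeeping. Where you genuinely differ is the oracle term: the paper disposes of $\lVert \tilde{\bfOmega}_i - \bfOmega_i^0 \rVert_{op} = o_p(1)$ in one line by citing a covariance-estimation concentration result \citep{vershynin2012close}, whereas you prove it by entrywise Chebyshev bounds using slice disjointness and the moment conditions (C5)--(C6), then aggregate over the $H^2$ entries under the rate $H = o(n^{2/5})$. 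Your route is more elementary and self-contained, and it buys transparency: it forces into the open the two issues you flag, namely uniformity over the diverging number of slices and the need for fourth conditional moments. The latter concern is genuine (the stated conditions only bound first and second conditional moments), and your proposed fix via the marginal Gaussianity $Z_i \sim N(0, \sigma_i^2)$, which gives $\E(Z_i^4 \mid Y \in \cfJ_h) \le 3\sigma_i^4 / p_h = O(H)$ under (C5), is precisely the device the paper itself uses in its proof of Lemma~\ref{lemma3}. The paper's citation-based route is shorter, but it hides these points; its appendix is in fact vague about which regularity conditions the cited concentration result requires, so your more pedestrian argument is arguably the more defensible one.
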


\subsection{Hypothesis testing} \label{subsec:test}
\subsubsection{Chi-squared test}
According to Proposition \ref{prop1}, the conditional independence test (\ref{indep test}) is equivalent to
\be \label{Hypothesis}
H_0: \bfnu_i = \mathbf{0} ~\text{versus}~ H_1: \bfnu_i \ne \mathbf{0}.
\ee
With asymptotic results in Sections \ref{subsec:thm} and \ref{subsec:se}, consider the Wald chi-squared test statistic $\hat{T}_i^{\chi} = \hat{\bfnu}_i^\top (\hat{\bfOmega}_i/n)^{-1} \hat{\bfnu}_i$. The following corollary provides the asymptotic distribution of $\hat{T}_i^{\chi}$.
\begin{corollary} \label{cor2}
Under regularity conditions (C1)-(C7), 
\ba
\hat{T}_i^{\chi} \xrightarrow{d} \chi^2_H (\nu_i^{0,\top} \bfOmega^{-1} \nu_i^0),
\ea
where $\chi^2_H(\lambda)$ is the noncentral chi-squared distribution with $H$ degrees of freedom and noncentral parameter $\lambda$.
\end{corollary}
Corollary \ref{cor2} is an immediate consequence of Theorem \ref{thm1} and \ref{thm2}. Under $H_0$, the asymptotic distribution of $\hat{T}_i^{\chi}$ is a central chi-squared distribution with $H$ degrees of freedom. We denote the hypothesis testing approach based on the chi-squared statistic by SDA-$\chi^2$.

\subsubsection{Kolmogorov-Smirnov and Cram{\"e}r-von-Mises tests}
Alternatively, because the test of hypothesis (\ref{Hypothesis}) is equivalent to $H_0: \nu_{hi} = 0$, for all $h \in \cfH$, versus $H_1: \nu_{hi} \ne 0$, for some $h \in \cfH$, we can consider each univariate test and then combine. The asymptotic normality in Theorem \ref{thm1} suggests a test statistic $z_{hi} = \sqrt{n} \hat{\nu}_{hi}/\sqrt{\hat{\Omega}_i(h, h)}$ and $z_{hi} \xrightarrow{d} N(0, 1)$ when $\nu_{hi} = 0$. Here, we consider a Kolmogorov-Smirnov (KS) type statistic:
\be \label{KS}
\hat{T}_i^{\text{KS}} = \max_{h \in \cfH} |z_{hi}| = \max_{h \in \cfH} \left| \frac{\sqrt{n} \hat{\nu}_{hi}}{\sqrt{\hat{\Omega}_i(h, h)}} \right|.
\ee
Thus, we reject $H_0$ if  $\hat{T}_i^{\text{KS}} > c_i$ where $c_i$ is some critical value. We also develop a Cram{\"e}r-von-Mises (CvM) type statistic $\hat{T}_i^{\text{CvM}} = \int |z_{hi}| d Q(h)$, where $Q$ is a weight function on $\cfH$. When using equal weights, the CvM-type statistic is defined as
\be \label{CvM}
\hat{T}_i^{\text{CvM}} = \frac{1}{H} \sum_{h=1}^H |z_{hi}| = \frac{1}{H} \sum_{h=1}^H \left| \frac{\sqrt{n} \hat{\nu}_{hi}}{\sqrt{\hat{\Omega}_i(h, h)}} \right|.
\ee
We name the proposed tests based on (\ref{KS}) and (\ref{CvM}) as SDA-KS and SDA-CvM, respectively. 

Because the asymptotic distributions of $\hat{T}_i^{\text{KS}}$ and $\hat{T}_i^{\text{CvM}}$ are difficult to derive analytically, we adopt a simulation-based approach referred to as the multiplier bootstrap (MB) \citep{van2000weak,chernozhukov2013gaussian}. The theoretical development of this method and a pseudocode are provided in the supplemental material.  

\subsection{Multiple hypothesis testing} \label{subsec:MH}
We propose an approach similar to the knockoff filter \citep{barber2015controlling,candes2018panning} for multiple hypothesis testing with FDR control. To generate a knockoff copy of $\bfX$, denoted as $\tilde{\bfX}$, there are two requirements: (a) each $\tilde{X}_i$ in $\tilde{\bfX}$ is exchangeable with the corresponding $X_i$ in $\bfX$; and (b) $\tilde{\bfX}$ provides no further regression information about the response $Y$, i.e. $\bfX$ and $\tilde{\bfX}$ need to be as dissimilar as possible. Generating knockoff copies can be challenging as it requires the distribution of $\bfX$ to be completely known \citep{candes2018panning} or can be consistently estimated \citep{barber2020robust}. However, our proposed SDA has an advantage due to the standardized variable $Z_i$. Thus, we can generate $\tilde{\bfz}_i$, a random sample of size $n$ from $N(0, \hat{\sigma}_i^2)$, which will suffice.    

Let $\tilde{T}_i$ be the test statistic calculated using the knockoff copy $\tilde{\bfz}_i$, we obtain the feature statistic $M_i = M(\hat{T}_i, \tilde{T}_i)$, where $M(\cdot, \cdot)$ is an antisymmetric function. Following Corollary \ref{cor1}, the distribution of $M_i$ is asymptotically symmetric to 0 for $i \in \cfA^c$ (more details are available in the proof of Theorem \ref{thm3}). Therefore, with a sufficient number of hypotheses, given a threshold value $t$, $\#\{i: M_i \le -t \}$ can be used as a conservative estimate of the number of false selections, regardless of the exact distribution of $M_i$ under the null. Thus, with a desired FDR level $q$, the set of selected variables is defined as $\{ i: M_i \ge \tau \}$, where the data-dependent threshold $\tau$ is defined as
\be \label{adthresh}
\tau = \min \left\{t: \frac{\#\{ i: M_i \le -t \}}{\#\{ i: M_i \ge t \}} \le q \right\}.
\ee
The pseudocode for the implementation of the proposed multiple hypothesis testing procedure is summarized in Algorithm \ref{algo2}. The next lemma, with a detailed proof available in the supplementary material, shows that both false positive proportion (FDP), which is defined as the proportion of false selections among all selected variables, and FDR can be controlled asymptotically using the threshold $\tau$ defined in (\ref{adthresh}).  
\begin{theorem} \label{thm4}
Under (C1)-(C7), let $p_0 = |\cfA^c|$, as $n \rightarrow \infty$ and $p_0 \rightarrow \infty$, the SDA procedure in Algorithm \ref{algo2} satisfies
\ba
P(\mbox{FDP}(\tau) \le q) \rightarrow 1 ~\text{ and }~ \limsup_{p_0 \rightarrow \infty} \text{FDR}(\tau) \le q.
\ea    
\end{theorem} 


\begin{algorithm}
\caption{False discovery rate control via SDA.} 
\label{algo2}
\begin{algorithmic}[1]
\State Divide the range of $Y$ into $H$ slices
\For {$i = 1, \cdots, p$}
    \State Obtain $\hat{\bfz}_i$ and $\hat{\sigma}_i^2$ by fitting a high-dimension regression model $X_i = \bfzeta_i^\top \bfX_{-i} + Z_i$
    \State Generate a knockoff copy $\tilde{\bfz}_i$ by randomly drawing $n$ sample from $N(0, \hat{\sigma}^2)$
    \State Calculate the test statistic $\hat{T}_i$ and $\Tilde{T}_i$ from $\hat{\bfz}_i$ and $\tilde{\bfz}_i$, respectively
    \State Calculate the feature statistic $M_i$
\EndFor
\State Calculate the data-dependent threshold $\tau$ defined in (\ref{adthresh})
\For {$i = 1, \cdots, p$}
    \If{$M_i > \tau$}
    \State Reject $H_i$
    \Else 
    \State Do not reject $H_i$
    \EndIf
\EndFor
\end{algorithmic}
\end{algorithm}

\section{Simulation Studies} \label{sec:sim}
In this section, we investigate the empirical performance of the proposed SDA-$\chi^2$, SDA-KS, and SDA-CvM procedures through extensive simulation scenarios. 

\subsection{Simulation settings} \label{subsec:set}
We set the significance level at $0.05$ and FDR at $0.1$ for multiple hypothesis testing. We consider $n = 200$ or $400$, and $p = 1000$ or $2000$. Tuning parameters of $\hat{\bfzeta}_i$ for all methods are selected based on ten-fold cross-validation.

\subsubsection{Correlation structures}
{\it Fixed precision matrix.} 
We first consider the setting with fixed correlation structures. We generate covariates $\bfX$ from a multivariate Gaussian distribution with mean zero and precision matrix $\bfTheta$. Here, we let $\bfTheta$ be a block diagonal matrix with cluster sizes of $q = 5$. Within each block, we let $\Theta_{ii} = 1$ and $\Theta_{ii'} = 0.5$ for $i \ne i'$. We denote $\cfB = \{1, \cdots, B \}$, where $B = p/q$, the index set of blocks.  

{\it Random network covariates.}
We then consider the scenario where the correlation structure of $\bfX$ is determined by a randomly generated small-world network \citep{watts1998collective}. Specifically, each $X_i$ is connected to covariates within $e=5$ neighbors, with a rewiring probability of 0.25. For each connected pair $(X_i, X_{i'})$, the corresponding entry $\theta_{ii'}$ in the precision matrix is uniformly sampled from $(-1, -0.5) \cup (0.5, 1)$.  

\subsubsection{Regression functions} 
We consider the following two single-index models (1 and 2) and two multiple-index models (3 and 4):
\ba
&\text{Model 1}& ~ Y = \bfb^\top \bfX + \epsilon; \\ &\text{Model 2}& ~ Y = \sin(\bfb^\top \bfX) \exp(\bfb^\top \bfX) + \epsilon; \\ &\text{Model 3}& ~ Y = \frac{3 \bfb_1^\top \bfX}{0.5 + (1.5 \bfb_2^\top \bfX)^2} + \epsilon; \\ &\text{Model 4}& ~ Y = \sum_{k=1}^d (0 \vee \bfb_k^\top \bfX) + \epsilon,
\ea
where $\epsilon \sim N(0, 1)$ and $\bfX \indep \epsilon$. 

\subsection{Simulation results}
We discuss key findings of simulation studies from multiple perspectives in the following five sub-sections. Detailed simulation settings are included in the supplemental material. 
\subsubsection{The choice of $H$} \label{subsec:result}
We first investigate the impact of the choice of $H$. Results for SDA-CvM, SDA-$\chi^2$, and SDA-KS are presented in Figures \ref{FigH}, S1, and S2, respectively. Across all regression models, the empirical type I error rates for the null variables are nearly unaffected by the choice of $H$. In the two single-index models (models 1 and 2), the empirical power for $b_1$ (the larger effect size) is consistently 1 across all values of $H$. For $b_2$ (the smaller effect size), the power decreases with increasing $H$ when $n = 200$, but this decreasing trend is less pronounced when $n = 400$. In model 3, the two active variables in $\bfb_1$ show patterns similar to those in models 1 and 2, while the two active variables in $\bfb_2$ behave differently. Specifically, the empirical power for $X_3$ (larger effect size) increases sharply from the null level at $H = 2$ and then stabilizes, whereas for $X_4$ (smaller effect size), the power increases and then gradually decreases. In model 4, although it is a multiple-index nonlinear model, the regression function is close to a linear one, leading to a decreasing trend in power across all $X_i$ as $H$ increases.

In summary, the optimal choice of $H$ depends on the sample size, the effect size, and the form of regression functions. Values of $H$ between 4 and 7 provide robust performance across all settings considered in this study. Therefore, we set $H = 5$ for the remaining simulation studies.

\subsubsection{Empirical selection rates} \label{subsec:result2}
We then study and compare the empirical type I error rates and power for SDA-$\chi^2$, SDA-KS, and SDA-CvM. To compare with existing methods, we also include the selective inference (SI) method \citep{lee2016exact, taylor2018post} and the high-dimensional permutation (HP) test based on the partial correlation \citep{hemerik2021permutation}. Empirical selection rates for the first 100 covariates are calculated based on 1000 simulated data sets. 

The SI method performs poorly in nonlinear settings due to very low selection rates of active variables under the LASSO estimator, as shown in Table S1. Since SI only applies to variables selected by LASSO, the low selection rates also result in low empirical power. Therefore, we focus our comparison on the proposed SDA methods and the HP method. Tables \ref{table_compare} and S2 summarize the simulation results for the fixed and network precision matrix structures, respectively. All methods conservatively control the type I error across all settings, with the exception of SDA-CvM and SDA-$\chi^2$ under the fixed precision matrix in model 1. The empirical power of all methods increases with larger sample sizes, stronger effect sizes, or larger cluster sizes. Among the three SDA statistics, SDA-CvM and SDA-$\chi^2$ perform similarly and consistently exhibit higher power than SDA-KS. In all settings, the SDA-based methods outperform the HP method.

\subsubsection{Multiple hypothesis testing} \label{subsec:result3}
In this section, we study the performance of the proposed multiple hypothesis testing procedure introduced in Section \ref{subsec:MH}. Based on the simulation results in Section \ref{subsec:result2}, which show that SDA-CvM and SDA-$\chi^2$ outperform SDA-KS, we focus on SDA-CvM as the test statistic. We consider two feature statistics: coefficient difference and sign-max, referred to as CvMCD-SDA and CvMSM-SDA, respectively. As a benchmark method, we also include the model-X knockoff procedure proposed by \cite{candes2018panning}, using the LASSO coefficient-difference statistic (LCD-Knockoff).

Histograms of the FDP and power, based on 200 simulated datasets, are shown in Figure S3. Here, power is defined as the proportion of active variables correctly selected. Both CvMCD-SDA and CvMSM-SDA control the FDR (the expected value of FDP) at the nominal 0.1 level, with CvMCD-SDA being more conservative. CvMSM-SDA also achieves higher power across all settings. When compared with the LCD-Knockoff method, our proposed procedures perform better in models 2 and 3 but slightly worse in model 1 (linear model). Notably, LCD-Knockoff performs the worst in model 2, where in more than 70\% of the simulations, no variable is selected.

\subsubsection{Impact of the covariates distribution} \label{subsec:result4}
Our proposed method relies on the normality assumption for the covariates $\bfX$. In this section, we evaluate its robustness under alternative distributions of $\bfX$. We consider three multivariate $t$-distributions and one multivariate chi-squared distribution, with detailed settings available in the supplemental material. 

Table \ref{table_dist} summarizes the simulation results across the four regression models. Our method successfully controls the type I error for $T_5^{\text{MVT}}$, $T_3^{\text{MVT}}$, and $T_5^{\text{GC}}$ in all settings, with a slight inflation observed under $\chi_5^{2,\text{GC}}$ in models 1 and 2. In terms of power, the method performs only slightly worse than the multivariate Gaussian case (Table S2) under $T_5^{\text{MVT}}$ and $T_3^{\text{MVT}}$, and similarly under $T_5^{\text{GC}}$. The results indicate that our method is robust under the elliptical family. For $\chi_5^{2,\text{GC}}$, performance is similar to the Gaussian case in models 1 and 2, but lower power is observed for $X_7$ in model 3 and for $X_1$ and $X_{11}$ in model 4.

\subsubsection{Impact of the precision matrix sparsity} \label{subsec:result5}
In this section, we investigate the impact of the precision matrix sparsity. We focus on the fixed precision matrix (block diagonal) specified in Section \ref{subsec:set}, considering two sparsity levels: $q = 5$ and $q = 10$. We examine three estimators of $\bfzeta_i$: (1) the LASSO estimator, denoted by $\hat{\bfzeta}_i$; (2) the LASSO estimator with a preliminary SIS step that reduces dimensionality to $\lfloor n/\log(n) \rfloor$, denoted by $\hat{\bfzeta}_i^{\text{SIS}}$; and (3) the ``oracle" estimator, which assumes that the active set $\cfI_i$ is known and applies least squares estimation conditional on $\cfI_i$, denoted by $\hat{\bfzeta}_i^{\text{OR}}$.

Figures \ref{FigMiss} and S4 summarize the simulation results for the variables across the five blocks. For $\hat{\bfzeta}_i$, the empirical type I error rate increases and power decreases as the number of active variables within the same block increases. When $q = 5$, incorporating the SIS step helps control the inflated type I error rate. When $q = 10$, the type I error becomes less inflated, but the power drops more sharply for $\hat{\bfzeta}_i^{\text{SIS}}$. The oracle estimator $\hat{\bfzeta}_i^{\text{OR}}$ maintains stable type I error and power across all settings. 

This simulation study highlights the limitations of the LASSO estimator in high-dimensional and non-sparse settings, where it fails to fully account for the influence of other active variables that are correlated with the target variable. Incorporating an SIS step can mitigate this issue when the sparsity level is moderate. However, as sparsity decreases further, the SIS step alone becomes insufficient, and then an alternative estimation strategy, such as SCAD or adaptive LASSO, may be considered.

\section{Gene expressions associated with Alzheimer's Disease} \label{sec:real}
The Alzheimer’s Disease Neuroimaging Initiative (ADNI) study was established to support the development of treatments for Alzheimer’s disease (AD) by tracking disease-related biomarkers over time. This longitudinal, multi-center study collected a comprehensive set of clinical, imaging, and genetic data from participants aged 55 to 90 across the United States and Canada. Participants included individuals with normal aging, mild cognitive impairment, dementia, and AD. Among other clinical variables, the ADNI dataset includes the Mini-Mental State Examination (MMSE) \citep{folstein1975mmse}, a widely used screening tool for cognitive function. While the optimal MMSE cutoff for identifying cognitive impairment remains a topic of debate \citep{chapman2016mmse, salis2023mmse}, a commonly used threshold for the diagnosis of dementia is a score of 24 or below, out of a maximum score of 30 \citep{tombaugh1992mmse, zhang2021mmse}.

In this study, we apply the proposed SDA method to the ADNI microarray gene expression data to identify genes associated with MMSE scores. The gene expression data were obtained from the ADNI database (\url{adni.loni.usc.edu}, downloaded on April 27, 2024). Although the dataset includes microarray data for 745 individuals, we restrict our analysis to the 292 individuals with both gene expression and MMSE measurements available at the same study visit.

Due to the ultra-high dimensionality, we perform an initial variable screening, specifically the SIS \citep{fan2008sis}, using Spearman's $\rho$ correlation coefficients. The sure screening property of the marginal correlation statistic guarantees that the procedure has a high probability of including all the relevant variables, under regularity conditions. We pre-select $p=2000$ probes, a relatively large number, to ensure that no relevant probes will be excluded from the screening. Note that, although we pre-select $2000$ probes, their conditional associations with the MMSE score will be evaluated given all the remaining 49,385 probes. 

Guided by our simulation findings in Section \ref{subsec:result3}, we implement Algorithm \ref{algo2} using the CvMSM-SDA test statistic on the 2,000 selected probes. To account for the ultra-high dimensionality when conditioning on the remaining probes, we apply the SIS procedure described in Section \ref{subsec:thm} with $\gamma = n / \log(n)$. 

The study reveals that, at FDR of $0.1$, the CvMSM-SDA selects 4 probes. We compare this result with existing literature and find that all 4 selected probes are known to be expressed more highly in AD patients than in normal patients. At a more liberal FDR of $0.2$, our method identifies an additional 7 probes. Among these probes, the targets of 6 probes have identified associations with AD in the literature, and the extra probe is a new finding. All of the identified probes, with literature references, are summarized in Table S3. 

\section{Discussion} \label{sec:con}
This article explores high-dimensional statistical inference leveraging the theory of sufficient dimension reduction. Specifically, we propose the sufficient dimension association, a model-free measure for the conditional dependence between each predictor and the response variable. We prove that, under regularity conditions, the asymptotic normality for the proposed SDA estimator can be achieved in high-dimensional settings when $\log(p) = o(n)$. Based on the central limit theorem proven in this paper, we construct SDA-$\chi^2$, SDA-KS and SDA-CvM test statistics along with a multiplier bootstrap algorithm for a single test. 

We also develop a knockoff-SDA method for multiple hypothesis testing with FDR control. One advantage for the proposed method is that the SDA statistics, as well as the corresponding knockoffs, can be obtained separately for each variable, which is easy for parallel computations and memory-efficient. Furthermore, our SDA procedure does not require estimating the distribution of $\bfX$, which is crucial for large-scale studies. For ultra-high dimensional data, such as the ADNI gene expression dataset, estimating a large but sparse covariance or precision matrix can be computationally challenging, due to the requirement of huge memory space to restore the large matrix and extensive computations associated with it. 

The validity of the proposed method relies on the normality of $\bfX$ and the sparsity of $\bfTheta$. Such conditions are commonly assumed when analyzing gene expression data. Normality of the gene expression data can be assumed either after normalization, as in the case of microarray data, or after both the mean-variance modeling and normalization, as in the case of RNAseq read-counts \citep{law2014voom}. Furthermore, our simulation results demonstrate that our proposed method is robust against model misspecification, especially within the elliptical family. Sparsity of gene regulatory networks is a common assumption in statistical and computational biology methods development \citep{noor2012inferring,wang2024wendy}. While a gene network as a whole may be highly complex, each gene is expected to interact strongly with only a few other members of the network. The covariance matrix of the gene expression levels is thus assumed to be sparse, and its inverse can be estimated using sparsity-based methods such as the graphical lasso \citep{wang2024wendy}. 

{\it Conservative type I error rates.} Our simulation studies indicate that the proposed method tends to produce conservative type I error rates. This issue is likely due to the use of SIR for constructing the sequence of transformation functions, where using indicator functions to capture local conditional dependence between $Y$ and $Z_i$ may result in information loss, particularly favoring the null hypothesis. In future work, we may explore alternative approaches, such as splines or polynomial functions, for constructing the set of transformation functions ${ g_h(\cdot) }$ to address this limitation.

{\it Survival outcome.} This work, motivated by the ADNI study, focuses on continuous outcomes. The proposed method can be easily applied to survival outcomes. Let $T$ denote the survival time, $C$ denote the censoring time, and $\Delta = I(C>T)$. Our outcome variable becomes $(Y, \Delta)$, where $Y = T(1-\Delta) + C\Delta$. By assuming $(T, C) \indep \bfX | \bfB^\top \bfX$, \cite{cook2003dimension} shows that $\cfS_{(Y, \Delta)|\bfX} \subseteq \cfS_{T|\bfX}$ and the sufficient predictors for the regression $(Y, \Delta)|\bfX$ are also sufficient predictors for the regression $T|\bfX$. By slicing the bivariate outcome $(Y, \Delta)$, we can stratify $Y$ based on the censoring indicator $\Delta$ and separately partition the range of $Y$ into $H_{\Delta=0}$ and $H_{\Delta=1}$ slices, with $H_{\Delta=0} + H_{\Delta=1} = H$. Without loss of generality, we assume balanced slices so that $n = cH$, where $|\cfJ_1| = \cdots = |\cfJ_H| = c$.

{\it Network information.} Our simulation study in Section \ref{subsec:result5} suggests that the LASSO estimator may lead to inflated type I error rates and reduced power when the precision matrix is non-sparse. This issue can be addressed by using the least squares estimator, provided that the correlation structure of $\bfX$ is known. As noted by \cite{li2008network}, network information is often available in gene expression studies, and genes connected within a network tend to exhibit similar regression coefficients. This motivates a potential future direction: incorporating network information into the proposed method. Specifically, we can modify the formulation in (\ref{lasso}) by including only the variables that are connected to $X_i$ within the network, and replacing the $l_1$-penalty with the $l_2$-penalty. Furthermore, methods for jointly testing the significance of groups of variables based on network connections can also be developed.


\section*{Acknowledgment}
The content is solely the responsibility of the authors and does not necessarily represent the official views of the National Institutes of Health. Data used in preparation of this article were obtained from the Alzheimer’s Disease Neuroimaging Initiative (ADNI) database (\url{http://adni.loni.usc.edu}). As such, the investigators within the ADNI contributed to the design and implementation of ADNI and/or provided data but did not participate in analysis or writing of this report. A complete listing of ADNI investigators can be found at: \url{http://adni.loni.usc.edu/wp-content/uploads/how_to_apply/ADNI_Acknowledgement_List.pdf}

\section*{Supplemental Material}
In the supplemental material, we provide proofs of theorems, additional simulation results and discussions referenced in Section \ref{sec:sim} and additional results referenced in Section \ref{sec:real}. 

\newpage
\begin{table}[h!] \scriptsize
\caption{Empirical power and Type I error rates for fixed precision matrix. }\label{table_compare}
\centering
\begin{tabular}{ccc|ccccccccc} 
\toprule
\hline
$n$ & $p$ & Method & $X_1$ & $X_2$ & $X_6$ & $X_7$ & $X_{11}$ & $X_{12}$ & $X_{16}$ & $X_{17}$ & Null \\ 
\hline
\multicolumn{12}{c}{Model 1} \\
\hline
400 & 1000 & SDA-KS & 0.621 & *** & 0.979 & *** & 0.999 & 1.000 & 1.000 & 1.000 & 0.030 \\
& & SDA-CvM & 0.719 & *** & 0.997 & *** & 1.000 & 1.000 & 1.000 & 1.000 & 0.044 \\
& & SDA-$\chi^2$ & 0.781 & *** & 0.998 & *** & 1.000 & 1.000 & 1.000 & 1.000 & 0.042 \\
& & HP & 0.514 & *** & 0.644 & *** & 0.675 & 0.670 & 0.714 & 0.724 & 0.073 \\
200 & 1000 & SDA-KS & 0.301 & *** & 0.750 & *** & 0.889 & 0.868 & 0.993 & 0.996 & 0.030 \\
& & SDA-CvM & 0.417 & *** & 0.826 & *** & 0.951 & 0.932 & 0.998 & 0.999 & 0.047 \\
& & SDA-$\chi^2$ & 0.429 & *** & 0.852 & *** & 0.963 & 0.950 & 1.000 & 1.000 & 0.047 \\
& & HP & 0.126 & *** & 0.300 & *** & 0.385 & 0.387 & 0.558 & 0.556 & 0.026 \\
400 & 2000 & SDA-KS & 0.634 & *** & 0.981 & *** & 1.000 & 0.998 & 1.000 & 1.000 & 0.036 \\
& & SDA-CvM & 0.751 & *** & 0.988 & *** & 1.000 & 1.000 & 1.000 & 1.000 & 0.052 \\
& & SDA-$\chi^2$ & 0.794 & *** & 0.995 & *** & 1.000 & 1.000 & 1.000 & 1.000 & 0.051 \\
& & HP & 0.256 & *** & 0.562 & *** & 0.624 & 0.612 & 0.706 & 0.707 & 0.037 \\
200 & 2000 & SDA-KS & 0.310 & *** & 0.764 & *** & 0.884 & 0.881 & 0.994 & 0.991 & 0.034 \\
& & SDA-CvM & 0.432 & *** & 0.862 & *** & 0.939 & 0.931 & 0.998 & 0.997 & 0.055 \\
& & SDA-$\chi^2$ & 0.447 & *** & 0.890 & *** & 0.952 & 0.948 & 1.000 & 0.998 & 0.056 \\
& & HP & 0.076 & *** & 0.144 & *** & 0.172 & 0.184 & 0.269 & 0.258 & 0.022 \\
\hline
\multicolumn{12}{c}{Model 2} \\
\hline
400 & 1000 & SDA-KS & 0.218 & *** & 0.598 & *** & 0.803 & 0.819 & 0.980 & 0.976 & 0.025 \\
& & SDA-CvM & 0.300 & *** & 0.739 & *** & 0.905 & 0.913 & 0.995 & 0.986 & 0.037 \\
& & SDA-$\chi^2$ & 0.318 & *** & 0.764 & *** & 0.922 & 0.925 & 0.997 & 0.993 & 0.036 \\
& & HP & 0.508 & *** & 0.666 & *** & 0.674 & 0.681 & 0.729 & 0.751 & 0.074 \\
200 & 1000 & SDA-KS & 0.103 & *** & 0.289 & *** & 0.388 & 0.389 & 0.681 & 0.689 & 0.022 \\
& & SDA-CvM & 0.163 & *** & 0.379 & *** & 0.508 & 0.515 & 0.772 & 0.762 & 0.036 \\
& & SDA-$\chi^2$ & 0.166 & *** & 0.396 & *** & 0.519 & 0.528 & 0.813 & 0.807 & 0.035 \\
& & HP & 0.107 & *** & 0.328 & *** & 0.408 & 0.387 & 0.568 & 0.575 & 0.027 \\
400 & 2000 & SDA-KS & 0.214 & *** & 0.620 & *** & 0.802 & 0.816 & 0.976 & 0.968 & 0.026 \\
& & SDA-CvM & 0.334 & *** & 0.776 & *** & 0.901 & 0.902 & 0.996 & 0.991 & 0.037 \\
& & SDA-$\chi^2$ & 0.343 & *** & 0.791 & *** & 0.913 & 0.913 & 0.998 & 0.992 & 0.037 \\
& & HP & 0.277 & *** & 0.557 & *** & 0.620 & 0.634 & 0.720 & 0.733 & 0.038 \\
200 & 2000 & SDA-KS & 0.108 & *** & 0.295 & *** & 0.408 & 0.415 & 0.665 & 0.667 & 0.025 \\
& & SDA-CvM & 0.170 & *** & 0.394 & *** & 0.518 & 0.533 & 0.775 & 0.779 & 0.038 \\
& & SDA-$\chi^2$ & 0.163 & *** & 0.413 & *** & 0.533 & 0.555 & 0.806 & 0.803 & 0.039 \\
& & HP & 0.065 & *** & 0.154 & *** & 0.182 & 0.171 & 0.254 & 0.272 & 0.022 \\
\hline
\multicolumn{12}{c}{Model 3} \\
\hline
400 & 1000 & SDA-KS & 0.926 & 0.818 & 1.000 & 0.404 & 1.000 & 0.967 & *** & *** & 0.021 \\
& & SDA-CvM & 0.954 & 0.943 & 1.000 & 0.520 & 1.000 & 0.997 & *** & *** & 0.030 \\
& & SDA-$\chi^2$ & 0.970 & 0.939 & 1.000 & 0.553 & 1.000 & 0.997 & *** & *** & 0.028 \\
& & HP & 0.717 & 0.024 & 0.960 & 0.045 & 1.000 & 0.516 & *** & *** & 0.034 \\
200 & 1000 & SDA-KS & 0.649 & 0.368 & 1.000 & 0.153 & 0.987 & 0.658  & *** & *** & 0.017 \\
& & SDA-CvM & 0.753 & 0.628 & 1.000 & 0.228 & 1.000 & 0.853 & *** & *** & 0.029 \\
& & SDA-$\chi^2$ & 0.784 & 0.611 & 1.000 & 0.227 & 0.999 & 0.845 & *** & *** & 0.028 \\
& & HP & 0.297 & 0.007 & 0.681 & 0.013 & 0.822 & 0.444 & *** & *** & 0.021 \\
400 & 2000 & SDA-KS & 0.925 & 0.818 & 1.000 & 0.396 & 1.000 & 0.984 & *** & *** & 0.021 \\
& & SDA-CvM & 0.967 & 0.944 & 1.000 & 0.547 & 1.000 & 0.998 & *** & *** & 0.031 \\
& & SDA-$\chi^2$ & 0.977 & 0.938 & 1.000 & 0.555 & 1.000 & 0.999 & *** & *** & 0.030 \\
& & HP & 0.543 & 0.010 & 0.863 & 0.043 & 0.978 & 0.511 & *** & *** & 0.028 \\
200 & 2000 & SDA-KS & 0.690 & 0.393 & 1.000 & 0.211 & 0.994 & 0.687 & *** & *** & 0.019 \\
& & SDA-CvM & 0.788 & 0.615 & 1.000 & 0.276 & 0.999 & 0.879 & *** & *** & 0.032 \\
& & SDA-CvM & 0.819 & 0.605 & 1.000 & 0.284 & 0.999 & 0.866 & *** & *** & 0.031 \\
& & HP & 0.228 & 0.004 & 0.442 & 0.024 & 0.516 & 0.248 & *** & *** & 0.015 \\
\hline
\multicolumn{12}{c}{Model 4} \\
\hline
400 & 1000 & SDA-KS & 0.414 & *** & 0.899 & *** & 0.999 & 1.000 & 0.941 & 0.989 & 0.020 \\
& & SDA-CvM & 0.459 & *** & 0.940 & *** & 1.000 & 1.000 & 0.977 & 0.994 & 0.029 \\
& & SDA-$\chi^2$ & 0.518 & *** & 0.961 & *** & 1.000 & 1.000 & 0.984 & 0.996 & 0.028 \\
& & HP & 0.696 & *** & 0.946 & *** & 1.000 & 0.519 & 0.496 & 0.500 & 0.012 \\
200 & 1000 & SDA-KS & 0.200 & *** & 0.586 & *** & 0.942 & 0.994 & 0.700 & 0.870 & 0.017 \\
& & SDA-CvM & 0.263 & *** & 0.680 & *** & 0.977 & 0.998 & 0.825 & 0.940 & 0.028 \\
& & SDA-$\chi^2$ & 0.278 & *** & 0.723 & *** & 0.990 & 0.998 & 0.826 & 0.948 & 0.027 \\
& & HP & 0.305 & *** & 0.635 & *** & 0.803 & 0.442 & 0.287 & 0.378 & 0.007 \\
400 & 2000 & SDA-KS & 0.412 & *** & 0.901 & *** & 0.997 & 1.000 & 0.950 & 0.997 & 0.020 \\
& & SDA-CvM & 0.472 & *** & 0.929 & *** & 0.999 & 1.000 & 0.979 & 1.000 & 0.029 \\
& & SDA-$\chi^2$ & 0.529 & *** & 0.962 & *** & 0.999 & 1.000 & 0.990 & 1.000 & 0.028 \\
& & HP & 0.514 & *** & 0.858 & *** & 0.980 & 0.502 & 0.448 & 0.478 & 0.008 \\
200 & 2000 & SDA-KS & 0.217 & *** & 0.618 & *** & 0.945 & 0.991 & 0.744 & 0.885 & 0.018 \\
& & SDA-CvM & 0.297 & *** & 0.691 & *** & 0.989 & 1.000 & 0.838 & 0.953 & 0.029 \\
& & SDA-$\chi^2$ & 0.312 & *** & 0.729 & *** & 0.993 & 1.000 & 0.856 & 0.958 & 0.026 \\
& & HP & 0.230 & *** & 0.445 & *** & 0.514 & 0.250 & 0.146 & 0.169 & 0.009 \\
\bottomrule
\end{tabular}
\end{table}

\begin{table}[h!] \scriptsize
\caption{Empirical power and Type I error rates with respect to different distributions of $
\bfX$. }\label{table_dist}
\centering
\begin{tabular}{c|ccccccccc} 
\toprule
\hline
Dist. & $X_1$ & $X_2$ & $X_6$ & $X_7$ & $X_{11}$ & $X_{12}$ & $X_{16}$ & $X_{17}$ & Null \\ 
\hline
\multicolumn{10}{c}{Model 1} \\
\hline
$T_5^{\text{MVT}}$ & 0.725 & *** & 0.992 & *** & 1.000 & 0.999 & 1.000 & 1.000 & 0.040 \\
$T_3^{\text{MVT}}$ & 0.681 & *** & 0.986 & *** & 0.998 & 1.000 & 1.000 & 1.000 & 0.045 \\
$T_5^{\text{GC}}$ & 0.787 & *** & 0.999 & *** & 1.000 & 1.000 & 1.000 & 1.000 & 0.044 \\
$\chi_5^{2,\text{GC}}$ & 0.873 & *** & 1.000 & *** & 1.000 & 1.000 & 1.000 & 1.000 & 0.064 \\
\hline
\multicolumn{10}{c}{Model 2} \\
\hline
$T_5^{\text{MVT}}$ & 0.353 & *** & 0.779 & *** & 0.928 & 0.933 & 0.993 & 0.994 & 0.036 \\
$T_3^{\text{MVT}}$ & 0.326 & *** & 0.750 & *** & 0.897 & 0.907 & 0.993 & 0.991 & 0.034 \\
$T_5^{\text{GC}}$ & 0.362 & *** & 0.796 & *** & 0.927 & 0.922 & 0.992 & 0.994 & 0.035 \\
$\chi_5^{2,\text{GC}}$ & 0.752 & *** & 0.996 & *** & 1.000 & 0.999 & 1.000 & 1.000 & 0.055 \\
\hline
\multicolumn{10}{c}{Model 3} \\
\hline
$T_5^{\text{MVT}}$ & 0.957 & 0.847 & 1.000 & 0.373 & 1.000 & 0.968 & *** & *** & 0.030 \\
$T_3^{\text{MVT}}$ & 0.926 & 0.751 & 1.000 & 0.287 & 1.000 & 0.923 & *** & *** & 0.030 \\
$T_5^{\text{GC}}$ & 0.979 & 0.903 & 1.000 & 0.581 & 1.000 & 0.980 & *** & *** & 0.034 \\
$\chi_5^{2,\text{GC}}$ & 0.964 & 0.791 & 1.000 & 0.139 & 1.000 & 0.816 & *** & *** & 0.037 \\
\hline
\multicolumn{10}{c}{Model 4} \\
\hline
$T_5^{\text{MVT}}$ & 0.485 & *** & 0.927 & *** & 0.999 & 1.000 & 0.939 & 0.993 & 0.027 \\
$T_3^{\text{MVT}}$ & 0.467 & *** & 0.878 & *** & 0.992 & 0.999 & 0.881 & 0.974 & 0.028 \\
$T_5^{\text{GC}}$ & 0.542 & *** & 0.954 & *** & 1.000 & 1.000 & 0.978 & 0.995 & 0.027 \\
$\chi_5^{2,\text{GC}}$ & 0.025 & *** & 1.000 & *** & 0.115 & 1.000 & 0.896 & 1.000 & 0.040 \\
\bottomrule
\end{tabular}
\end{table}

\newpage

\begin{figure}[h!]
    \centering
    \includegraphics[width=1\linewidth]{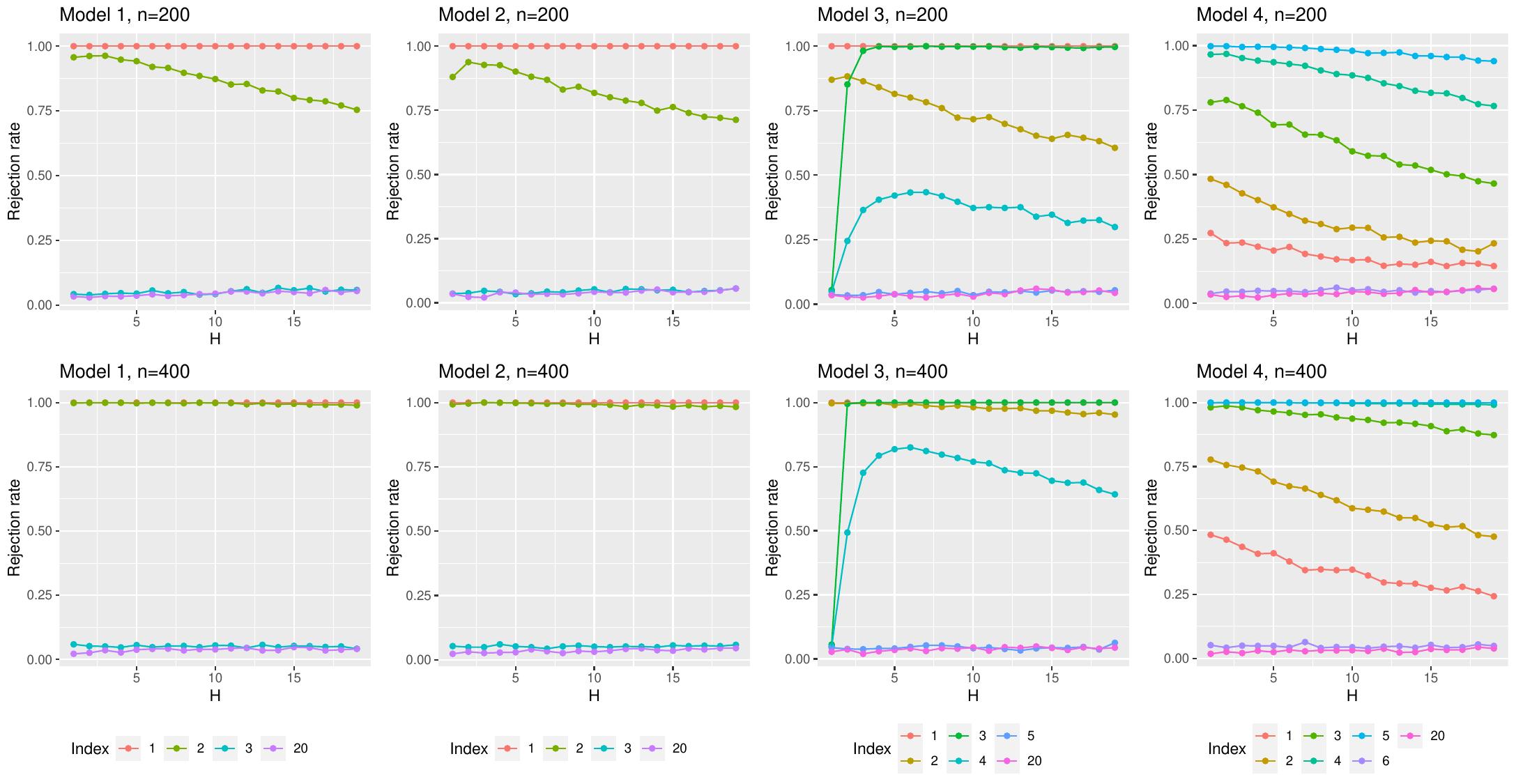}
    \caption{Empirical Type I error rates (for two selected null covariates) and power (for all active signals) with respect to $H$ for SDA-CvM.}
    \label{FigH}
\end{figure}

\begin{figure}[h!]
    \centering
    \includegraphics[width=1\linewidth]{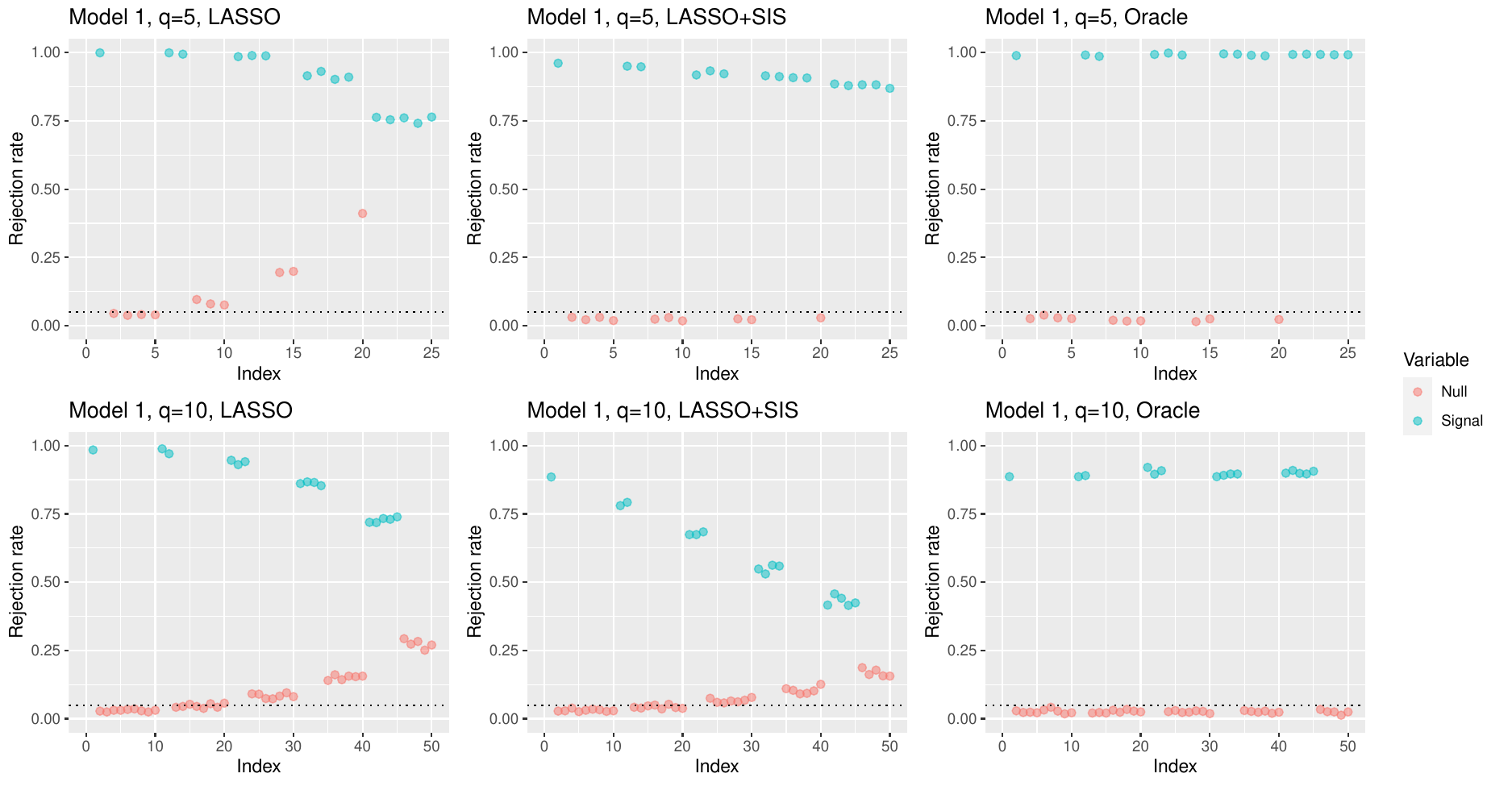}
    \caption{Empirical Type I error rates and power for different sparsity levels. }
    \label{FigMiss}
\end{figure}

\renewcommand{\theequation}{S\arabic{equation}}
\setcounter{equation}{0}

\renewcommand{\thefigure}{S\arabic{figure}}
\setcounter{figure}{0}

\renewcommand{\thetable}{S\arabic{table}}
\setcounter{table}{0}

\appendix
\begin{center}
  \appendixpagename  
\end{center}

\section{Proofs and additional discussion}
\subsection{Markov blanket and conditional dependence}
According to \cite{lauritzen1996graphical}, the Markov blanket property $Y \indep \bfX | \bfX(\cfA)$ is called the global Markov property (G), and the hypothesis $Y \indep X_i | \bfX_{-i}$ is called the pairwise Markov property (P). It is trivial to show that (G) is a sufficient condition for (P), see Proposition 3.4 in \cite{lauritzen1996graphical}. 


To show that (P) leads to (G), \cite{pearl1987logic} gave an {\it intersection} assumption: $Y \indep X_i | \bfX_{-i}$ and $Y \indep X_{i'} | \bfX_{-i'}$ leads to $Y \indep X_i \cup X_{i'} | \bfX_{-\{ i,i' \}}$ for every $i\neq i'$. This assumption does not hold universally. However, Proposition 3.1 in \cite{lauritzen1996graphical} states that if the joint density of all variables with respect to a product measure is positive and continuous, then the intersection assumption holds. In such a case, (G) is equivalent to (P). 

\subsection{Proof of Lemma 1}
We denote the left-hand side of (7) as $\bfU_i = (U_{1i}, \cdots, U_{Hi})^\top$, where
\be \label{Uh}
U_{hi} = \frac{(\hat{\bfzeta}-\bfzeta^0)^\top}{\sqrt{n}} \sum_{j=1}^n I(Y_j \in \cfJ_h) \bfx_{-i,j}.
\ee
With $l \in \cfP_i$, where $\cfP_i = \{ 1,\ldots, i-1,i+1,\ldots,p \}$, since $X_{l,j}$ and $I(Y_j \in \cfJ_h)$ are sub-Gaussian random variables, we have $I(Y_j \in \cfJ_h) X_{l,j}$ is sub-exponential \citep{wainwright2019high}. Using Proposition 2.9 in \cite{wainwright2019high}, for every $t>0$, we have the following tail bound:
\ba
P \left( \left\lvert \frac{1}{n} \sum_{j=1}^n I(Y_j \in \cfJ_h) X_{l,j} \right\rvert \ge t \right) \le 2 \exp \left\{ -C_1 n \min \left( \frac{t^2}{A_l^2}, \frac{t}{A_l} \right) \right\},
\ea
where $C_1, A_l$ are positive constants. Let $A = \max_{l \in \cfP_i} A_l$, we have
\be \label{bound_inf}
P \left( \frac{1}{n} \left\lVert \sum_{j=1}^n I(Y_j \in \cfJ_h) \bfX_{j} \right\rVert_{\infty} \ge t \right) \le 2 (p-1) \exp \left\{ -C_1 n \min \left( \frac{t^2}{A^2}, \frac{t}{A} \right) \right\}.
\ee
Thus, with $t = C_2 \sqrt{\log(p-1)/n}$, where $C_2$ is some constant, the tail bound (\ref{bound_inf}) implies that
\ba
\left\lVert \frac{1}{n}  \sum_{j=1}^n I(Y_j \in \cfJ_h) \bfX_{j} \right\rVert_{\infty} = O_p(\sqrt{\log(p)/n}).
\ea
Finally, because
\ba
\left\lVert \bfU_i \right\rVert_1 \le \left( \sum_{h=1}^H \left\lVert \frac{1}{\sqrt{n}} \sum_{j=1}^n I(Y_j \in \cfJ_h) \bfX_{j} \right\rVert_{\infty} \right) \left\lVert \hat{\bfzeta}-\bfzeta^0 \right\rVert_1,
\ea
the lemma is proved.

\subsection{Proof of Lemma 2}
Because
\ba
\frac{1}{2n} \lVert \bfx_i - \hat{\bfzeta}_i^\top \bfx_{-i} \rVert_2^2 + \lambda_i \lVert \hat{\bfzeta}_i \rVert_1 \le \frac{1}{2n} \lVert \bfx_i - \bfzeta_i^{0,\top} \bfx_{-i} \rVert_2^2 + \lambda_i \lVert \bfzeta_i^0 \rVert_1,
\ea
we have the {\it basic inequality}:
\be \label{basic}
\frac{1}{2n} \lVert (\bfzeta^0_i - \hat{\bfzeta}_i)^\top \bfx_{-i} \rVert_2^2 + \lambda_i \lVert \hat{\bfzeta}_i \rVert_1 \le \frac{1}{n} \bfZ_i^\top (\bfzeta^0_i - \hat{\bfzeta}_i)^\top \bfx_{-i} + \lambda_i \lVert \bfzeta_i^0 \rVert_1.
\ee
Let $\cfE_i$ be the set of events defined as
\ba
\cfE_i = \left\{ \lambda_i \ge \frac{1}{n} \lVert \bfZ_i^\top \bfx_{-i} \rVert_{\infty} \right\},
\ea
(\ref{basic}) implies that, under $\cfE_i$, we have
\ba
\frac{1}{2n} \lVert (\bfzeta^0_i - \hat{\bfzeta}_i)^\top \bfx_{-i} \rVert_2^2 + \lambda_i \lVert \hat{\bfzeta}_i \rVert_1 &\le& \frac{1}{n} \lVert \bfZ_i^\top \bfx_{-i} \rVert_{\infty} \lVert \bfzeta^0_i - \hat{\bfzeta}_i \rVert_1 + \lambda_i \lVert \bfzeta_i^0 \rVert_1 \\
&\le& \lambda_i \lVert \bfzeta^0_i - \hat{\bfzeta}_i \rVert_1 + \lambda_i \lVert \bfzeta_i^0 \rVert_1,
\ea
and
\be \label{e1}
&~& \frac{1}{2n} \lVert (\bfzeta^0_i - \hat{\bfzeta}_i)^\top \bfx_{-i} \rVert_2^2 + \lambda_i (\lVert \hat{\bfzeta}_i(\cfI_i) \rVert_1 + \lVert \hat{\bfzeta}_i(\cfI_i^c) \rVert_1) \nonumber \\
&\le& \lambda_i \lVert \bfzeta^0_i(\cfI_i) - \hat{\bfzeta}_i(\cfI_i) \rVert_1 + \lambda_i \lVert \hat{\bfzeta}_i(\cfI_i^c) \rVert_1 + \lambda_i \lVert \bfzeta^0_i(\cfI_i) \rVert_1.
\ee
Because
\be \label{e2}
\lVert \hat{\bfzeta}_i(\cfI_i) \rVert_1 + \lVert \hat{\bfzeta}_i(\cfI_i^c) \rVert_1 \ge \lVert \bfzeta^0_i(\cfI_i) \rVert_1 + \lVert \hat{\bfzeta}_i(\cfI_i^c) \rVert_1 - \lVert \hat{\bfzeta}_i(\cfI_i) - \bfzeta^0_i(\cfI_i) \rVert_1,
\ee
combining (\ref{e1}) and (\ref{e2}), we have
\be \label{e3}
\frac{1}{2n} \lVert (\bfzeta^0_i - \hat{\bfzeta}_i)^\top \bfx_{-i} \rVert_2^2 + \lambda_i \lVert \hat{\bfzeta}_i(\cfI_i^c) \rVert_1 \le 2 \lambda_i \lVert \bfzeta^0_i(\cfI_i) - \hat{\bfzeta}_i(\cfI_i) \rVert_1,
\ee
which implies
\be \label{e4}
\lVert \bfzeta^0_i(\cfI_i^c) - \hat{\bfzeta}_i(\cfI_i^c) \rVert_1 \le 2 \lVert \bfzeta^0_i(\cfI_i) - \hat{\bfzeta}_i(\cfI_i) \rVert_1,
\ee
and $(\bfzeta^0_i - \hat{\bfzeta}_i) \in \cfC_2(I_i)$.
Thus, based on (\ref{e3}), we have
\be \label{L1bound}
&~& \frac{1}{2n} \lVert (\bfzeta^0_i - \hat{\bfzeta}_i)^\top \bfx_{-i} \rVert_2^2 + \lambda_i \lVert \bfzeta^0_i - \hat{\bfzeta}_i \rVert_1 \nonumber \\
&=& \frac{1}{2n} \lVert (\bfzeta^0_i - \hat{\bfzeta}_i)^\top \bfx_{-i} \rVert_2^2 + \lambda_i \lVert \bfzeta^0_i(\cfI_i) - \hat{\bfzeta}_i(\cfI_i) \rVert_1 + \lambda \lVert \hat{\bfzeta}_i(\cfI_i^c) \rVert_1 \nonumber \\
&\le& 3\lambda_i \lVert \bfzeta^0_i(\cfI_i) - \hat{\bfzeta}_i(\cfI_i) \rVert_1 \nonumber \\
&\le& 3\lambda_i \sqrt{I_i} \lVert \bfzeta^0_i - \hat{\bfzeta}_i \rVert_2.
\ee
With the restricted eigenvalue condition and (\ref{e4}), we have
\be \label{L1bound2}
\lVert \bfzeta^0_i - \hat{\bfzeta}_i \rVert_2 \le \left\lVert \frac{1}{\sqrt{\kappa n}} \bfx_{-i}^\top (\bfzeta^0_i - \hat{\bfzeta}_i) \right\rVert_2.
\ee
By plugging (\ref{L1bound2}) into (\ref{L1bound}), we have
\ba
\frac{1}{2n} \left\lVert \bfx_{-i}^\top (\bfzeta^0_i - \hat{\bfzeta}_i) \right\rVert_2^2 \le 3\lambda_i \sqrt{I_i} \left\lVert \frac{1}{\sqrt{\kappa n}} \bfx_{-i}^\top (\bfzeta^0_i - \hat{\bfzeta}_i) \right\rVert_2,
\ea
which implies
\be \label{L2bound}
\frac{1}{2\sqrt{n}} \left\lVert \bfx_{-i}^\top (\bfzeta^0_i - \hat{\bfzeta}_i) \right\rVert_2 \le 3\lambda_i \sqrt{I_i} \frac{1}{\sqrt{\kappa}}.
\ee
Therefore, by plugging (\ref{L1bound2}) and (\ref{L2bound}) into (\ref{L1bound}), we have
\ba
\lVert \bfzeta^0_i - \hat{\bfzeta}_i \rVert_1 \le \frac{9}{2\kappa} \lambda_i I_i.
\ea

Finally, we need choose $\lambda_i$ so that $P(\cfE_i) \rightarrow 1$. Let $\bfX_{l} = (X_{l,1}, \cdots, X_{l,n})^\top$, because for any $t > 0$, we have
\ba
P\left( \frac{1}{n} \lVert \bfZ_i^\top \bfx_{-i} \rVert_{\infty} > t \right) &\le& \sum_{l \in \cfP_i} P\left( \frac{1}{n} \bfZ_i^\top \bfX_{l} > t \right) \\
&\le& (p-1) \exp\left( -\frac{t^2}{n^{-2} \sigma_i^2 \max_l \lVert \bfX_l \rVert_2^2 } \right) \\
&\le& (p-1) \exp\left( -\frac{n t^2}{C \sigma_i^2 } \right),
\ea
where $n^{-1} \max_l \lVert \bfX_l \rVert_2^2 \le C$. Thus, by choosing 
\ba
\lambda_i = \sqrt{\frac{C \sigma^2_i \{ \log(p - 1) + \log(\delta) \}}{n}},
\ea
where $\delta \rightarrow \infty$, we have
\ba
P(\cfE_i) = 1 - P\left( \frac{1}{n} \lVert \bfZ_i^\top \bfx_{-i} \rVert_{\infty} > \lambda_i \right) \ge 1 - \frac{1}{\delta},
\ea
and the lemma is proved.

\subsection{Proof of Lemma 3}
By the Cauchy-Schwarz inequality, we have
\ba
\E\left( \lVert \bfOmega_i^{-1/2} \bfpsi_i(\bfW) \rVert_2^3 \right) = \E\left( | \bfpsi_i^\top(\bfW) \bfOmega_i^{-1} \bfpsi_i(\bfW) |^{3/2} \right) \le \lVert \bfOmega_i^{-1} \rVert_{op}^{3/2} \E\left( \lVert \bfpsi_i(\bfW) \rVert_2^3 \right), 
\ea
where $\lVert \cdot \rVert_{op}$ is the operator norm. We can decompose the $\bfOmega_i$ as $\bfP \bfD_i - (\bfP \bfe_i) (\bfP \bfe_i)^\top$, where 
\ba
\bfP &=& \text{diag}(p_1, \cdots, p_H), \\ 
\bfD_i &=& \text{diag}(\E(Z_i^2|Y \in \cfJ_1), \cdots, \E(Z_i^2|Y \in \cfJ_H)), \\
\bfe_i &=& (\E(Z_i|Y \in \cfJ_1), \cdots, \E(Z_i|Y \in \cfJ_H))^\top.
\ea
Using the Woodbury identity, we have
\be \label{Woodbury}
\bfOmega_i^{-1} = (\bfP \bfD_i)^{-1} + \frac{(\bfP \bfD_i)^{-1} (\bfP \bfe_i) (\bfP \bfe_i)^\top (\bfP \bfD_i)^{-1}}{1 - (\bfP \bfe_i)^\top \bfD_i^{-1} (\bfP \bfe_i)}.
\ee
According to regularity conditions (C6) and (C7),
we have $\lVert \bfP \rVert_{op} = O(H^{-1})$, $\lVert \bfP^\top \rVert_{op} = O(H)$, $\lVert \bfD_i \rVert_{op} = O(H)$, $\lVert \bfD_i^\top \rVert_{op} = O(H^{-1})$, and
\ba
\lVert \bfe_i \rVert_2 = \sqrt{\sum_{h=1}^H \E(Z_i | Y \in \cfJ_h)} = O(H^{3/2}).
\ea
Thus,
\be \label{bound_op}
\lVert (\bfP \bfD_i)^{-1} \rVert_{op} &\le& \lVert \bfP^{-1} \rVert_{op} \lVert \bfD_i^{-1} \rVert_{op} = O(1), \nonumber \\
\lVert \bfP \bfe_i \rVert_2 &\le& \lVert \bfP \rVert_{op} \lVert \bfe_i \rVert_2 = O(H^{1/2}), \\
(\bfP \bfe_i)^\top \bfD_i^{-1} (\bfP \bfe_i) &\le& \lVert \bfP \bfe_i \rVert_2^2 \lVert (\bfP \bfD_i)^{-1} \rVert_{op} = O(H). \nonumber 
\ee
Thus, combining (\ref{Woodbury}) and (\ref{bound_op}), we have
\ba
\lVert \bfOmega_i^{-1} \rVert_{op} = O(1).
\ea
On the other hand, since $Z_i \sim N(0, \sigma^2)$, we have
\ba
\E\lVert \bfpsi_i(\bfW) \rVert_2^3 = \E\left( \sum_{h=1}^H [I(Y \in \cfJ_h) Z_i - \nu_{hi}^0]^2 \right)^{3/2} \le \E(Z_i^2+Z_i^4) \le C_1,
\ea
where $C_1$ is a positive constant. Therefore, we can conclude that
\be \label{bound_op3}
\E\left( \lVert \bfOmega_i^{-1/2} \bfpsi_i(\bfW) \rVert_2^3 \right) \le C_2,
\ee
where $C_2$ is a positive constant. Then, according to the Berry-Esseen Bound \citep{bentkus2005lyapunov}, we have
\be \label{Be_Es} \small
\sup_{A \in \cfC_H} \left| P\left( \frac{1}{\sqrt{n}} \sum_{j=1}^n \bfpsi_{i}(\bfw_j) \in A \right) - P\left( N(0, \bfOmega_i) \in A \right) \right| \le C \frac{H^{1/4}}{n^{1/2}} \E\left( \lVert \bfOmega_i^{-1/2} \bfpsi_i(\bfW) \rVert_2^3 \right),
\ee
where $C$ is a positive constant, and Lemma 3 can be proved by incorporating (\ref{bound_op3}) into (\ref{Be_Es}).

\subsection{Proof of Theorem 2}
First, we consider an intermediate estimator
\ba
\tilde{\bfOmega}_i = \frac{1}{n} \sum_{j=1}^n [\bfJ_j \otimes (X_{i,j} - \bfzeta_i^{0,\top} \bfX_{-i,j}) - \tilde{\bfnu}_i] [\bfJ_j \otimes (X_{i,j} - \bfzeta_i^{0,\top} \bfX_{-i,j}) - \tilde{\bfnu}_i]^\top,
\ea
where $\bfJ_j = (I(Y_j \in \cfJ_1), \cdots, I(Y_j \in \cfJ_H))^\top$, $\tilde{\bfnu}_i = (\title{\nu}_{1i}, \cdots, \tilde{\nu}_{Hi})^\top$ and
\ba
\tilde{\nu}_{hi}^\top = \frac{1}{n} \sum_{j=1}^n I(Y_j \in \cfJ_h) (X_{i,j} - \bfzeta_i^{0,\top} \bfX_{\backslash i, j}).
\ea
Under the regularity conditions (C1)-(C5), \cite{vershynin2012close} shows that
\ba
\lVert \tilde{\bfOmega}_i - \bfOmega_i^0 \rVert_{op} = o_p(1).
\ea
We then study the distance between $\tilde{\bfOmega}_i$ and $\hat{\bfOmega}_i$, where
\ba
\hat{\bfOmega}_i - \tilde{\bfOmega}_i &=& \frac{1}{n} \sum_{j=1}^n \bfJ_j \bfJ_j^\top [2 (X_{i,j} - \bfzeta_i^{0,\top} \bfX_{-i,j}) (\hat{\bfzeta}_i^{\top} \bfX_{-i,j} - \bfzeta_i^{0,\top} \bfX_{-i,j}) + (\bfzeta_i^{0,\top} \bfX_{-i,j} - \hat{\bfzeta}_i^{\top} \bfX_{-i,j})^2 ] \\ &~& + \tilde{\bfnu}_i \tilde{\bfnu}_i^\top - \hat{\bfnu}_i \hat{\bfnu}_i^\top. 
\ea
By taking the operator norm on both sides, we have
\be \label{norm}
\lVert \hat{\bfOmega}_i - \tilde{\bfOmega}_i \rVert_{op} &\le& \frac{1}{n} \sum_{j=1}^n \lVert \bfJ_j \rVert_2^2 \{2 |X_{i,j} - \bfzeta_i^{0,\top} \bfX_{-i,j}| |(\hat{\bfzeta}_i^{\top} - \bfzeta_i^{0,\top}) \bfX_{-i,j}| + |(\hat{\bfzeta}_i^{\top} - \bfzeta_i^{0,\top}) \bfX_{-i,j}|^2\} \nonumber \\ &~& + \lVert \tilde{\bfnu}_i \rVert_2^2 - \lVert \hat{\bfnu}_i \rVert_2^2.
\ee
As shown in the proof of Theorem 1, under the regularity conditions (I) and (II), we have
\ba
\lVert \tilde{\bfnu}_i \rVert_2^2 - \lVert \hat{\bfnu}_i \rVert_2^2 = o_p(1).
\ea
On the other hand, the first term in (\ref{norm}) is upper bounded by
\ba
\left( \frac{1}{n} \sum_{j=1}^n |(\hat{\bfzeta}_i^{\top} - \bfzeta_i^{0,\top}) \bfX_{-i,j}|^2 \right) \left(1 + 2 \left( \frac{1}{n} \sum_{j=1}^n |X_{i,j} - \bfzeta_i^{0,\top} \bfX_{-i,j}| \right)^{1/2} \right),
\ea
and the theorem can be proved since the first term converges to 0 as $n \rightarrow \infty$.

\subsection{Proof of Theorem 3}
Let $\bar{T}_i$ be the test statistic (either SDA-$\chi^2$, SDA-KS, or SDA-CvM) obtained by using the true values of $\bfz_i = (Z_{i,1}, \cdots, Z_{i,n})^\top$, because $M(\cdot, \cdot)$ is antisymmetric, we have $P(\bar{M}_i < -t) - P(\bar{M}_i > t) = 0$ for any $t>0$, where $\bar{M}_i := M(\bar{T}_i, \tilde{T}_i)$. Thus, as a direct consequence of Corollary 1, we have $P(M_i < -t) - P(M_i > t) \rightarrow 0$.

Next, we develop an upper-bound for the variance of $\sum_{i \in \cfA^c} I(M_i > t)$. According to Lemma 1, under regularity conditions, we have $M_i - \bar{M}_i = o_p(1)$. Thus, we have $I(M_i > t) - I(\bar{M}_i > t) = o_p(1)$, which implies
\be \label{distance1}
\Var\left( \sum_{i \in \cfA^c} I(M_i > t) \right) = \Var\left( \sum_{i \in \cfA^c} I(\bar{M}_i > t) \right) + o(p_0). 
\ee
Let
\ba
\bar{\nu}_{hi} = \frac{1}{n} \sum_{j=1}^n I(Y_j \in \cfJ_h) Z_{i,j},
\ea
because $I(Y_j \in \cfJ_h) \indep Z_{i,j}$ under $H_0$, we have $\Cov(\bar{\nu}_{hi}, \bar{\nu}_{h'i'}) = 0$ for every $h, h' \in \cfH$ and $i, i' \in \cfA^c$ with $i \ne i'$, which implies $\Cov(\bar{M}_i, \bar{M}_{i'}) = 0$. Therefore, 
\ba
\Var\left( \sum_{i \in \cfA^c} I(\bar{M}_i > t) \right) = \sum_{i \in \cfA^c} \Var(I(\bar{M}_i > t)) \le \frac{p_0}{4},
\ea
and combining with (\ref{distance1}), with a constant $C$, we have
\be \label{varbound}
\Var\left( \sum_{i \in \cfA^c} I(M_i > t) \right) \le C p_0.
\ee

Next, let
\ba
\text{FP}(t) = \sum_{i \in \cfA^c} I( M_i>t )&,& \text{FP}^0(t) = \sum_{i \in \cfA^c} P( M_i>t ), \\
\text{TP}(t) = \sum_{i \in \cfA} I( M_i>t )&,& \hat{\text{FP}}(t) = \sum_{i \in \cfA^c} P( M_i<-t ),
\ea
we show that 
\be \label{distance2}
\sup_{t \in \mathbb{R}^+} p_0^{-1} |\text{FP}(t) - \text{FP}^0(t)| \rightarrow 0, ~\sup_{t \in \mathbb{R}^+} p_0^{-1} |\hat{\text{FP}}(t) - \text{FP}^0(t)| \rightarrow 0.
\ee
For any $\delta > 0$, let $\{ t_k \}_{k=0}^{N_\delta}$ be an increasing sequence of constants satisfies $t_0 = 0$, $N_{\delta} = \lceil 2/\delta \rceil$, $t_{N_\delta} = \infty$, and $p_0^{-1} |\text{FP}^0(t_{k-1}) - \text{FP}^0(t_k)| \le \delta/2$ for every $k \ge 1$. According to the union bound, we have
\ba
P\left( \sup_{t \in \mathbb{R}^+} p_0^{-1} |\text{FP}(t) - \text{FP}^0(t)| > \delta \right) &\le& P\left( \bigcup_{k=1}^{N_\delta} \sup_{t \in [t_{k-1}, t_k)} p_0^{-1} |\text{FP}(t) - \text{FP}^0(t)| > \delta \right) \\ 
&\le& \sum_{k=1}^{N_\delta} P\left( \sup_{t \in [t_{k-1}, t_k)} p_0^{-1} |\text{FP}(t) - \text{FP}^0(t)| > \delta \right) \\
&\le& \sum_{k=1}^{N_\delta} P\left( p_0^{-1} |\text{FP}(t_{k-1}) - \text{FP}^0(t_k)| > \delta \right) \\
&\le& \sum_{k=1}^{N_\delta} P\left( p_0^{-1} |\text{FP}(t_k) - \text{FP}^0(t_k)| > \delta/2 \right).
\ea
Therefore, based on Chebyshev's inequality and the variance bound (\ref{varbound}), we have
\ba
P\left( \sup_{t \in \mathbb{R}^+} p_0^{-1} |\text{FP}(t) - \text{FP}^0(t)| > \delta \right) \le \frac{4 C N_\delta}{p_0 \delta^2} \rightarrow 0
\ea
as $p_0 \rightarrow \infty$, and the first part of (\ref{distance2}) is proved. The second part of (\ref{distance2}) can be proved in a similar manner using the asymptotic symmetric property of $M_i$ under $H_0$.

Notice that the FDP with a given threshold value $t>0$ is defined as
\ba
\text{FDP}(t) = \frac{\text{FP}(t)}{\text{FP}(t) + \text{TP}(t)} = \frac{p_0^{-1} \text{FP}(t)}{p_0^{-1} \text{FP}(t) + p_0^{-1} \text{TP}(t)}.
\ea
We further denote
\ba
\hat{\text{FDP}}(t) = \frac{p_0^{-1} \hat{\text{FP}}(t)}{p_0^{-1} \text{FP}(t) + p_0^{-1} \text{TP}(t)} ~\text{and}~ \text{FDP}^0(t)= \frac{p_0^{-1} \text{FP}^0(t)}{p_0^{-1} \text{FP}^0(t) + p_0^{-1} \text{TP}(t)},
\ea
and define $t_\delta$ such that $P(\text{FDP}(t_\delta) \le q-\delta) \rightarrow 1$ for $0< \delta < q$. From (\ref{distance2}), we have
\be \label{converg1} 
P(\hat{\text{FDP}}(t_\delta) \le q) \ge P(\text{FDP}(t_\delta) \le q-\delta) P(|\hat{\text{FDP}}(t_\delta) -\text{FDP}(t_\delta)| \le \delta) \rightarrow 1,
\ee
which implies $P(|\text{FDP}(t_\delta) - \text{FDP}(\tau)| > \delta) \rightarrow 0$. Hence,
\ba
P(\text{FDP}(\tau) \le q) \ge P(\text{FDP}(t_\delta) \le q-\delta) P(|\text{FDP}(\tau) - \text{FDP}(t_\delta)| \le \delta) \rightarrow 1,
\ea
and the first part of Theorem 3 is proved. Based on the definition of $\tau$ and (\ref{converg1}), we have
\ba
P(t_\delta \ge \tau) \ge P(\hat{\text{FDP}}(t_\delta) \le q) \rightarrow 1,
\ea 
as $p_0 \rightarrow \infty$. Therefore, according to (\ref{distance2}), we have
\ba
\limsup_{p_0 \rightarrow \infty} \E[\text{FDP}(\tau)] &\rightarrow& \limsup_{p_0 \rightarrow \infty} \E[\text{FDP}(\tau) | \tau \le t_\delta] \\
&\le& \limsup_{p_0 \rightarrow \infty} \E[\text{FDP}(\tau) - \text{FDP}^0(\tau) | \tau \le t_\delta] \\ 
&~& + \limsup_{p_0 \rightarrow \infty} \E[\text{FDP}^0(\tau) - \hat{\text{FDP}}(\tau) | \tau \le t_\delta] + \limsup_{p_0 \rightarrow \infty} \E[\hat{\text{FDP}}(\tau) | \tau \le t_\delta] \\
&\le& \limsup_{p_0 \rightarrow \infty} \E\left[ \sup_{t \in (0, t_\delta)} |\text{FDP}(\tau) - \text{FDP}^0(\tau)| \right] \\
&~& + \limsup_{p_0 \rightarrow \infty} \E\left[ \sup_{t \in (0, t_\delta)} |\text{FDP}^0(\tau) - \hat{\text{FDP}}(\tau)| \right] + \limsup_{p_0 \rightarrow \infty} \E[\hat{\text{FDP}}(\tau)] \\
&\rightarrow& \limsup_{p_0 \rightarrow \infty} \E[\hat{\text{FDP}}(\tau)] \le q,
\ea
and the second part of Theorem 3 is proved.

\section{Multiplier bootstrap}
 The multiplier bootstrap (MB) method allows us to simulate the distributions of $\hat{T}_i^{\text{KS}}$ and $\hat{T}_i^{\text{CvM}}$ under $H_0$, upon which $p$-values or critical values can be estimated. Compared to the conventional bootstrap method, the computational advantage of MB is that it does not require computing the estimator repetitively, which can be expensive for the LASSO estimator $\hat{\bfzeta}_i$.

As shown in Corollary 1, $\hat{\bfnu}_i$ is asymptotically linear with an influence function $\bfpsi_i(\bfW)$. 
Let $U_1, \cdots, U_n$ be independent and identically distributed standard normal random variables that are also independent of the data. We define a simulated process $\bfphi^u(\cdot)$ as
\ba
\bfphi_i^u(\bfw) = \frac{1}{\sqrt{n}} \sum_{j=1}^n U_j \hat{\bfpsi}_i(\bfW_j),
\ea
where $\bfw = (\bfW_1, \cdots, \bfW_n)^\top$. We then show that conditioning on $\bfw$, the simulated process $\bfphi_i^u(\bfw)$ can be used to approximate the distribution of $\sqrt{n} (\hat{\bfnu}_i - \bfnu_i^0)$.
\begin{theorem} \label{thm3}
Under regularity conditions (C3)-(C7), we have
\ba
\bfphi_i^u(\bfw) | \bfw \xrightarrow{d} N(0, \bfOmega_i^0).
\ea
\end{theorem}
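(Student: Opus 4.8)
The plan is to exploit the fact that, conditionally on the observed data $\bfw$, the only randomness in $\bfphi_i^u(\bfw)$ comes from the multipliers $U_1,\dots,U_n$, which are i.i.d.\ $N(0,1)$ and independent of $\bfw$. First I would note that, given $\bfw$, each term $U_j\hat{\bfpsi}_i(\bfW_j)$ is a fixed vector scaled by an independent scalar Gaussian, so $\bfphi_i^u(\bfw)=n^{-1/2}\sum_{j=1}^n U_j\hat{\bfpsi}_i(\bfW_j)$ is a finite sum of independent Gaussian vectors and is therefore \emph{exactly} $H$-variate normal for every $n$. Using $\E[U_j\mid\bfw]=0$ and $\E[U_jU_k\mid\bfw]=\mathbf{1}\{j=k\}$, its conditional mean is $0$ and its conditional covariance is $n^{-1}\sum_{j=1}^n\hat{\bfpsi}_i(\bfW_j)\hat{\bfpsi}_i^\top(\bfW_j)=\hat{\bfOmega}_i$. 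Hence $\bfphi_i^u(\bfw)\mid\bfw\sim N(0,\hat{\bfOmega}_i)$ exactly, and the assertion collapses to showing that $N(0,\hat{\bfOmega}_i)$ is asymptotically indistinguishable from $N(0,\bfOmega_i^0)$.

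The second step is to bring in Theorem \ref{thm2}, which supplies $\hat{\bfOmega}_i\xrightarrow{P}\bfOmega_i^0$. For a fixed number of slices $H$ this alone would finish the argument, since the Gaussian law depends continuously on its covariance. Because $H$ diverges with $n$, I would instead control the discrepancy uniformly over the class $\cfC_H$ of convex sets used in Lemma \ref{lemma3}, invoking a Gaussian comparison inequality of the Chernozhukov--Chetverikov--Kato type \citep{chernozhukov2013gaussian} to bound
\[
\sup_{A\in\cfC_H}\bigl|P\bigl(N(0,\hat{\bfOmega}_i)\in A\bigr)-P\bigl(N(0,\bfOmega_i^0)\in A\bigr)\bigr|
\]
by an explicit function of the covariance gap $\lVert\hat{\bfOmega}_i-\bfOmega_i^0\rVert$ and of $H$. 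Since the downstream test statistics $\hat T_i^{KS}$ and $\hat T_i^{CvM}$ are Lipschitz functionals of the coordinates, the same comparison specialized to hyperrectangles is exactly the object the max-type bootstrap of \citet{chernozhukov2013gaussian} is designed for, so the entrywise maximum norm of the gap is the quantity that matters. Conditions (C1)--(C2) underpin the SDA framework that makes $\bfOmega_i^0$ the relevant limiting covariance through the influence vector $\bfpsi_i$, while (C5) keeps $\bfOmega_i^0$ and $\hat{\bfOmega}_i$ uniformly well-conditioned across the diverging slices so that the comparison constant stays bounded.

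The step I expect to be the main obstacle is matching the growth of $H$ against the rate at which the covariance gap closes. The convergence from Theorem \ref{thm2} must be fast enough, measured in the norm appearing in the comparison inequality, to dominate the polynomial-in-$H$ factor that the inequality introduces; this is the same tension that was resolved for the target law in Lemma \ref{lemma3} under $H=o(n^{2/5})$. I would therefore extract from the proof of Theorem \ref{thm2} a quantitative rate for $\lVert\hat{\bfOmega}_i-\bfOmega_i^0\rVert$ and verify that, together with $H=o(n^{2/5})$ from Theorem \ref{thm1}, it forces the displayed supremum to be $o_p(1)$. Combining this with the exact conditional Gaussianity of the first step, a triangle inequality over $\cfC_H$ then delivers the stated conditional weak convergence $\bfphi_i^u(\bfw)\mid\bfw\xrightarrow{d}N(0,\bfOmega_i^0)$.
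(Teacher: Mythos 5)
Your proposal takes essentially the same route as the paper's proof: conditional on the data $\bfw$, the multiplier process $\bfphi_i^u(\bfw)$ is \emph{exactly} $N(0,\hat{\bfOmega}_i)$, and the conclusion then follows from the consistency $\hat{\bfOmega}_i \xrightarrow{P} \bfOmega_i^0$ established in Theorem \ref{thm2} (which the paper's proof mis-cites as Lemma \ref{lemma2}). If anything, your write-up is more careful than the paper's two-line argument, since you explicitly flag the diverging-$H$ issue and propose resolving it with a Gaussian comparison inequality over convex sets, a gap the paper passes over in silence.
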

{\it Proof.} Conditioning on $\bfw$, since $U_j \sim N(0, 1)$ we have
\ba
\bfphi_i^u(\bfw) | \bfw \sim N(0, \hat{\bfOmega}_i),
\ea
where $\hat{\bfOmega}_i$ is the sample variance estimator defined in (18). The theorem can then be proved based on the consistency of $\hat{\bfOmega}_i$ proved in Theorem 2.

Thus, we can simulate the null distribution by generating $L$ simulated processes $\phi_i^u(\bfw)$, and the corresponding critical value or $p$-value for the hypothesis testing can be estimated using the ``plug-in asymptotic'' method introduced in \cite{andrews2013inference}. Specifically, given a significance level $\alpha$, the simulated critical value in SDA-KS is defined as
\be \label{sim CV}
\hat{c_i} = \sup \left \{ q: P \left( \max_{h \in \cfH} \left| z_{hi}^u \right| \le q \right) \le 1 - \alpha \right\},
\ee
where
\ba
z_{hi}^u = \frac{\phi_{hi}^u(\bfw)}{\sqrt{\hat{\Omega}_i(h,h)}}.
\ea
Similarly, the $p$-value in SDA-KS is estimated as
\be \label{sim p}
p\text{-value} = \hat{P} \left( \hat{T}_i^{\text{KS}} > \max_{h \in \cfH} \left| z_{hi}^u \right| \right).
\ee
For the SDA-CvM procedure, replace $\max_{h \in \cfH} \left| z_{hi}^u \right|$ with $\int \left| z_{hi}^u \right| dQ(h)$ in (\ref{sim CV}) and (\ref{sim p}) to estimate the critical value or the $p$-value. A pseudocode for the implementation of the proposed SDA-KS testing procedure is as follows. 

\begin{algorithm}
\caption{SDA-KS hypothesis testing via multiplier bootstrap} 
\label{algo}
\begin{algorithmic}[1]
\State Divide the range of $Y$ into $H$ slices
\State Calculate $\hat{\bfzeta}_i$
\For {$h = 1, \cdots, H$}
    \State Calculate $\hat{\nu}_{hi}$
\EndFor
\State Calculate the asymptotic variance estimator $\hat{\bfOmega}_i$
\State Calculate the KS test statistic $\hat{T}_i^{\text{KS}}$
\For {$l = 1, \cdots, L$}
    \State Simulate $n \times H$ samples from standard Gaussian distribution
    \State Generate the simulated process $\bfphi_i^u(\bfw)$
\EndFor
\State Estimate the critical value $c_i$ and $p$-value based on the $L$ simulated process $\bfphi_i^u(\bfw)$
\If{$\hat{T}_i^{\text{KS}} > c_i$ or $p$-value $< \alpha$}
\State Reject $H_0$
\Else 
\State Do not reject $H_0$
\EndIf
\end{algorithmic}
\end{algorithm}

\newpage
\section{Detailed simulation settings}
\subsection{The choice of $H$} \label{ssubsec:result}
We consider sample sizes $n = 200$ or 400, dimension $p = 1000$, and covariates $X$ generated from the small-world network correlation structure, with $H$ ranging from 2 to 20. For the four regression models, the active covariate sets and corresponding coefficients are specified as follows: for models 1 and 2, we set $\cfA = \{1, 2\}$ and $\bfb(\cfA) = (1, 0.5)^\top$; for model 3, $\cfA_1 = \{1, 2\}$, $\cfA_2 = \{3, 4\}$, $\bfb_1(\cfA_1) = (1, -0.5)^\top$, and $\bfb_2(\cfA_2) = (-1, 0.5)^\top$; and for model 4, with $d = 5$, $\cfA_k = {k}$, $\bfb_1(\cfA_1) = -0.5$, and $\bfb_k(\cfA_k) = 0.25 \times k$ for $k = 2, 3, 4, 5$. Here, $\cfA_k$ denotes the index set of non-zero coefficients in $\bfb_k$, and $\bigcup_{k=1}^d \cfA_k = \cfA$. Empirical selection rates for all active signals $i \in \cfA$ (demonstration of power) and two selected null covariates (demonstration of type I error) are calculated based on 1000 simulated data sets.

\subsection{Empirical selection rates} \label{ssubsec:result2}
For the four regression models, the active variables and corresponding coefficients are specified as follows: for models 1 and 2, we set $\cfA = \{1, 6, 11, 12, 16, 17\}$ and $\bfb(\cfA) = (-0.4, 0.6, -0.8, -0.8, 1, 1)^\top$; for model 3, $\cfA_1 = \{1, 6, 11\}$, $\cfA_2 = \{2, 7, 12\}$, $\bfb_1(\cfA_1) = (0.5, -1, 0.8)^\top$, and $\bfb_2(\cfA_2) = (-0.8, -0.5, 1)^\top$; and for model 4, with $d = 5$, $\cfA_1 = \{1\}$, $\cfA_2 = \{6\}$, $\cfA_3 = \{11\}$, $\cfA_4 = \{12\}$, $\cfA_5 = \{16, 17\}$, $\bfb_1(\cfA_1) = -0.5$, $\bfb_2(\cfA_2) = 0.8$, $\bfb_3(\cfA_3) = -1$, $\bfb_4(\cfA_4) = 1.25$, and $\bfb_5(\cfA_5) = (-0.8, 1)^\top$.

\subsection{Multiple hypothesis testing} \label{ssubsec:result3}
We focus on the random network covariates, with $n=400$ and $p=1000$. We consider models 1-3 as described in Section 4.1, under two sparsity levels, $p_1 = |\cfA| = 6$ or 12. For models 1 and 2, the first $p_1$ variables are set as active, with $b_{i} = 1$ for each $i \in \cfA$. For model 3, we define $\cfA_1 = {1, \ldots, s/2}$ and $\cfA_2 = {s/2+1, \ldots, s}$, with $b_{1,i} = 1$ for $i \in \cfA_1$ and $b_{2,i'} = -1$ for $i' \in \cfA_2$.

\subsection{Impact of the covariates distribution} \label{ssubsec:result4}
We consider the following settings: (1) a multivariate $t$-distribution with degrees of freedom ($df$) 5 and precision matrix $\bfTheta$ ($T_5^{\text{MVT}}$); (2) a multivariate $t$-distribution with $df = 3$ and precision matrix $\bfTheta$ ($T_3^{\text{MVT}}$); (3) marginal distributions following a $t$-distribution with $df = 5$, and correlation structure generated from a Gaussian copula with precision matrix $\bfTheta$ ($T_5^{\text{GC}}$); and (4) marginal distributions following a chi-squared distribution with $df = 5$, and correlation structure generated from a Gaussian copula with precision matrix $\bfTheta$ ($\chi_5^{2,\text{GC}}$). As in Section \ref{ssubsec:result3}, we focus on SDA-CvM with random network covariates, $n = 400$, and $p = 1000$. The sets $\cfA$ and corresponding coefficients follow the specifications from Section \ref{ssubsec:result2}.

\subsection{Impact of the precision matrix sparsity} \label{ssubsec:result5}
We evaluate models 1 and 2 from Section 4.1. For each block $k = 1, \cdots, 5$, the first $k$ variables are set as active, with $b_i = 0.8$ for every $i \in \cfA$.

\clearpage
\newpage

\section{Additional simulation results}
\begin{table}[h!]
\caption{Selection rates for the LASSO estimator used for selective inference. }\label{table_SI}
\centering
\begin{tabular}{c|c|c|c|c|c|c|c|c}
\toprule
    \hline
    $|\cfA|$ & $n$ & $p$ & $\beta=0.2$ & $\beta=-0.4$ & $\beta=0.6$ & $\beta=0.8$ & $\beta=1.0$ & $\beta=0$ \\
    \hline
    \multicolumn{9}{c}{Model 1, $q=5$} \\
    \hline
    $5$ & 400 & 1000 & 0.987 & 0.992 & 0.996 & 0.998 & 0.997 & 0.011--0.050 \\
    & 200 & 1000 & 0.799 & 0.973 & 0.970 & 0.974 & 0.976 & 0.013--0.052 \\
    & 400 & 2000 & 0.983 & 0.992 & 0.992 & 0.989 & 0.991 & 0.006--0.039 \\
    & 200 & 2000 & 0.720 & 0.974 & 0.967 & 0.978 & 0.971 & 0.005--0.039 \\
    \hline
    25 & 400 & 1000 & 0.148 & 0.337 & 0.553 & 0.591 & 0.578 & 0.063--0.115 \\
    & 200 & 1000 & 0.019 & 0.040 & 0.066 & 0.114 & 0.191 & 0.005--0.028 \\
    & 400 & 2000 & 0.067 & 0.151 & 0.310 & 0.521 & 0.634 & 0.019--0.060 \\
    & 200 & 2000 & 0.009 & 0.016 & 0.030 & 0.059 & 0.109 & 0.000--0.018 \\
    \hline
    \multicolumn{9}{c}{Model 1, $q=10$} \\
    \hline
    5 & 400 & 1000 & 0.994 & 0.994 & 0.996 & 0.997 & 0.996 & 0.016--0.049 \\
    & 200 & 1000 & 0.837 & 0.984 & 0.976 & 0.980 & 0.979 & 0.017--0.050 \\
    & 400 & 2000 & 0.988 & 0.987 & 0.993 & 0.990 & 0.992 & 0.004--0.037 \\
    & 200 & 2000 & 0.765 & 0.955 & 0.955 & 0.960 & 0.962 & 0.005--0.034 \\
    \hline
    50 & 400 & 1000 & 0.012 & 0.017 & 0.026 & 0.047 & 0.066 & 0.002--0.025 \\
    & 200 & 1000 & 0.010 & 0.012 & 0.013 & 0.021 & 0.029 & 0.001--0.018 \\
    & 400 & 2000 & 0.006 & 0.010 & 0.017 & 0.031 & 0.045 & 0.000--0.014 \\
    & 200 & 2000 & 0.005 & 0.007 & 0.010 & 0.015 & 0.021 & 0.000--0.014 \\
    \hline
    \multicolumn{9}{c}{Model 2, $q=5$} \\
    \hline
    5 & 400 & 1000 & 0.020 & 0.075 & 0.183 & 0.292 & 0.409 & 0.000--0.016 \\
    & 200 & 1000 & 0.007 & 0.017 & 0.051 & 0.106 & 0.162 & 0.000--0.010 \\
    & 400 & 2000 & 0.008 & 0.059 & 0.146 & 0.255 & 0.361 & 0.000--0.011 \\
    & 200 & 2000 & 0.008 & 0.008 & 0.040 & 0.086 & 0.141 & 0.000--0.008 \\
    \hline
    25 & 400 & 1000 & 0.003 & 0.004 & 0.004 & 0.006 & 0.007 & 0.000--0.008 \\
    & 200 & 1000 & 0.003 & 0.001 & <0.001 & 0.003 & 0.008 & 0.000--0.007 \\
    & 400 & 2000 & 0.002 & 0.003 & 0.002 & 0.004 & 0.005 & 0.000--0.007 \\
    & 200 & 2000 & 0.002 & 0.002 & <0.001 & 0.003 & 0.006 & 0.000--0.007 \\
    \hline
    \multicolumn{9}{c}{Model 2, $q=10$} \\
    \hline
    5 & 400 & 1000 & 0.035 & 0.129 & 0.241 & 0.400 & 0.510 & 0.001--0.020 \\
    & 200 & 1000 & 0.014 & 0.029 & 0.072 & 0.143 & 0.235 & 0.000--0.013 \\
    & 400 & 2000 & 0.018 & 0.088 & 0.199 & 0.350 & 0.485 & 0.000--0.013 \\
    & 200 & 2000 & 0.010 & 0.023 & 0.048 & 0.115 & 0.188 & 0.000--0.007 \\
    \hline
    50 & 400 & 1000 & 0.003 & 0.002 & 0.003 & 0.004 & 0.004 & 0.000--0.009 \\
    & 200 & 1000 & 0.002 & 0.002 & 0.003 & 0.002 & 0.003 & 0.000--0.008 \\
    & 400 & 2000 & 0.002 & 0.002 & 0.002 & 0.003 & 0.002 & 0.000--0.007 \\
    & 200 & 2000 & 0.001 & 0.001 & 0.002 & 0.001 & 0.001 & 0.000--0.006 \\
\bottomrule
\end{tabular}
\end{table}

\begin{table}[h!] \tiny
\caption{Empirical power and Type I error rates for network covariates matrix. }\label{table_compare2}
\centering
\begin{tabular}{ccc|ccccccccc} 
\toprule
\hline
$n$ & $p$ & Method & $X_1$ & $X_2$ & $X_6$ & $X_7$ & $X_{11}$ & $X_{12}$ & $X_{16}$ & $X_{17}$ & Null \\ 
\hline
\multicolumn{12}{c}{Model 1} \\
\hline
400 & 1000 & SDA-KS & 0.527 & *** & 0.934 & *** & 0.998 & 0.998 & 1.000 & 1.000 & 0.023 \\
& & SDA-CvM & 0.645 & *** & 0.961 & *** & 1.000 & 1.000 & 1.000 & 1.000 & 0.035 \\
& & SDA-Chi & 0.675 & *** & 0.978 & *** & 1.000 & 1.000 & 1.000 & 1.000 & 0.032 \\
& & HP & 0.405 & *** & 0.620 & *** & 0.709 & 0.704 & 0.813 & 0.808 & 0.034 \\
200 & 1000 & SDA-KS & 0.257 & *** & 0.589 & *** & 0.918 & 0.919 & 0.996 & 1.000 & 0.018 \\
& & SDA-CvM & 0.350 & *** & 0.709 & *** & 0.949 & 0.949 & 0.999 & 1.000 & 0.032 \\
& & SDA-Chi & 0.367 & *** & 0.742 & *** & 0.964 & 0.964 & 0.999 & 1.000 & 0.030 \\
& & HP & 0.068 & *** & 0.209 & *** & 0.389 & 0.408 & 0.592 & 0.588 & 0.018 \\
400 & 2000 & SDA-KS & 0.523 & *** & 0.933 & *** & 0.998 & 0.998 & 1.000 & 1.000 & 0.024 \\
& & SDA-CvM & 0.632 & *** & 0.959 & *** & 0.999 & 0.999 & 1.000 & 1.000 & 0.034 \\
& & SDA-Chi & 0.657 & *** & 0.969 & *** & 1.000 & 1.000 & 1.000 & 1.000 & 0.033 \\
& & HP & 0.149 & *** & 0.452 & *** & 0.670 & 0.663 & 0.768 & 0.802 & 0.018 \\
200 & 2000 & SDA-KS & 0.261 & *** & 0.610 & *** & 0.913 & 0.896 & 0.998 & 0.999 & 0.019 \\
& & SDA-CvM & 0.356 & *** & 0.712 & *** & 0.957 & 0.937 & 0.999 & 1.000 & 0.031 \\
& & SDA-Chi & 0.377 & *** & 0.742 & *** & 0.969 & 0.952 & 0.999 & 1.000 & 0.031 \\
& & HP & 0.057 & *** & 0.113 & *** & 0.199 & 0.206 & 0.310 & 0.321 & 0.016 \\
\hline
\multicolumn{12}{c}{Model 2} \\
\hline
400 & 1000 & SDA-KS & 0.244 & *** & 0.604 & *** & 0.932 & 0.921 & 0.997 & 0.995 & 0.022 \\
& & SDA-CvM & 0.292 & *** & 0.692 & *** & 0.938 & 0.944 & 0.998 & 1.000 & 0.031 \\
& & SDA-Chi & 0.314 & *** & 0.718 & *** & 0.959 & 0.963 & 1.000 & 1.000 & 0.030 \\
& & HP & 0.418 & *** & 0.604 & *** & 0.714 & 0.703 & 0.796 & 0.807 & 0.033 \\
200 & 1000 & SDA-KS & 0.118 & *** & 0.290 & *** & 0.615 & 0.614 & 0.892 & 0.914 & 0.018 \\
& & SDA-CvM & 0.159 & *** & 0.334 & *** & 0.663 & 0.629 & 0.910 & 0.925 & 0.029 \\
& & SDA-Chi & 0.164 & *** & 0.362 & *** & 0.704 & 0.693 & 0.931 & 0.945 & 0.029 \\
& & HP & 0.079 & *** & 0.194 & *** & 0.420 & 0.417 & 0.579 & 0.583 & 0.018 \\
400 & 2000 & SDA-KS & 0.221 & *** & 0.627 & *** & 0.934 & 0.931 & 0.996 & 1.000 & 0.022 \\
& & SDA-CvM & 0.280 & *** & 0.662 & *** & 0.941 & 0.930 & 0.998 & 1.000 & 0.037 \\
& & SDA-Chi & 0.303 & *** & 0.717 & *** & 0.961 & 0.960 & 0.999 & 1.000 & 0.030 \\
& & HP & 0.171 & *** & 0.465 & *** & 0.641 & 0.654 & 0.793 & 0.779 & 0.019 \\
200 & 2000 & SDA-KS & 0.113 & *** & 0.306 & *** & 0.621 & 0.593 & 0.904 & 0.898 & 0.017 \\
& & SDA-CvM & 0.134 & *** & 0.343 & *** & 0.649 & 0.605 & 0.901 & 0.906 & 0.029 \\
& & SDA-Chi & 0.154 & *** & 0.375 & *** & 0.691 & 0.662 & 0.928 & 0.942 & 0.028 \\
& & HP & 0.050 & *** & 0.115 & *** & 0.204 & 0.184 & 0.292 & 0.313 & 0.015 \\
\hline
\multicolumn{12}{c}{Model 3} \\
\hline
400 & 1000 & SDA-KS & 0.940 & 0.929 & 1.000 & 0.422 & 1.000 & 0.992 & *** & *** & 0.021 \\
& & SDA-CvM & 0.965 & 0.992 & 1.000 & 0.627 & 1.000 & 0.999 & *** & *** & 0.031 \\
& & SDA-Chi & 0.974 & 0.984 & 1.000 & 0.618 & 1.000 & 1.000 & *** & *** & 0.030 \\
& & HP & 0.699 & 0.007 & 0.966 & 0.009 & 0.997 & 0.503 & *** & *** & 0.027 \\
200 & 1000 & SDA-KS & 0.628 & 0.459 & 1.000 & 0.164 & 0.992 & 0.695 & *** & *** & 0.017 \\
& & SDA-CvM & 0.732 & 0.728 & 1.000 & 0.284 & 0.995 & 0.908 & *** & *** & 0.029 \\
& & SDA-Chi & 0.754 & 0.710 & 1.000 & 0.263 & 0.999 & 0.893 & *** & *** & 0.030 \\
& & HP & 0.290 & 0.007 & 0.692 & 0.006 & 0.742 & 0.416 & *** & *** & 0.019 \\
400 & 2000 & SDA-KS & 0.929 & 0.936 & 1.000 & 0.406 & 1.000 & 0.994 & *** & *** & 0.021 \\
& & SDA-CvM & 0.968 & 0.992 & 1.000 & 0.639 & 1.000 & 0.999 & *** & *** & 0.031 \\
& & SDA-Chi & 0.973 & 0.988 & 1.000 & 0.623 & 1.000 & 1.000 & *** & *** & 0.029 \\
& & HP & 0.585 & 0.004 & 0.895 & 0.004 & 0.972 & 0.498 & *** & *** & 0.024 \\
200 & 2000 & SDA-KS & 0.605 & 0.458 & 1.000 & 0.142 & 0.994 & 0.702 & *** & *** & 0.018 \\
& & SDA-CvM & 0.726 & 0.737 & 1.000 & 0.257 & 0.994 & 0.913 & *** & *** & 0.030 \\
& & SDA-Chi & 0.730 & 0.730 & 1.000 & 0.301 & 0.999 & 0.881 & *** & *** & 0.028 \\
& & HP & 0.248 & 0.012 & 0.455 & 0.012 & 0.488 & 0.277 & *** & *** & 0.015 \\
\hline
\multicolumn{12}{c}{Model 4} \\
\hline
400 & 1000 & SDA-KS & 0.384 & *** & 0.884 & *** & 0.984 & 0.999 & 0.799 & 0.989 & 0.021 \\
& & SDA-CvM & 0.473 & *** & 0.932 & *** & 0.989 & 1.000 & 0.888 & 0.994 & 0.031 \\
& & SDA-CvM & 0.503 & *** & 0.945 & *** & 0.993 & 1.000 & 0.920 & 0.997 & 0.030 \\
& & HP & 0.713 & *** & 0.967 & *** & 0.997 & 0.500 & 0.465 & 0.499 & 0.006 \\
200 & 1000 & SDA-KS & 0.198 & *** & 0.547 & *** & 0.780 & 0.955 & 0.461 & 0.802 & 0.018 \\
& & SDA-CvM & 0.284 & *** & 0.651 & *** & 0.868 & 0.979 & 0.593 & 0.893 & 0.030 \\
& & SDA-Chi & 0.252 & *** & 0.671 & *** & 0.855 & 0.984 & 0.595 & 0.871 & 0.028 \\
& & HP & 0.333 & *** & 0.688 & *** & 0.753 & 0.419 & 0.222 & 0.355 & 0.007 \\
400 & 2000 & SDA-KS & 0.411 & *** & 0.864 & *** & 0.981 & 1.000 & 0.808 & 0.976 & 0.021 \\
& & SDA-CvM & 0.492 & *** & 0.923 & *** & 0.994 & 1.000 & 0.886 & 0.993 & 0.031 \\
& & SDA-Chi & 0.515 & *** & 0.945 & *** & 0.996 & 1.000 & 0.911 & 0.994 & 0.030 \\
& & HP & 0.619 & *** & 0.909 & *** & 0.963 & 0.497 & 0.410 & 0.481 & 0.004 \\
200 & 2000 & SDA-KS & 0.181 & *** & 0.532 & *** & 0.763 & 0.953 & 0.473 & 0.771 & 0.018 \\
& & SDA-CvM & 0.262 & *** & 0.635 & *** & 0.856 & 0.981 & 0.589 & 0.859 & 0.030 \\
& & SDA-Chi & 0.244 & *** & 0.650 & *** & 0.877 & 0.986 & 0.621 & 0.901 & 0.028 \\
& & HP & 0.228 & *** & 0.481 & *** & 0.517 & 0.284 & 0.184 & 0.230 & 0.007 \\
\bottomrule
\end{tabular}
\end{table}

\begin{figure}[h!]
    \centering
    \includegraphics[width=1\linewidth]{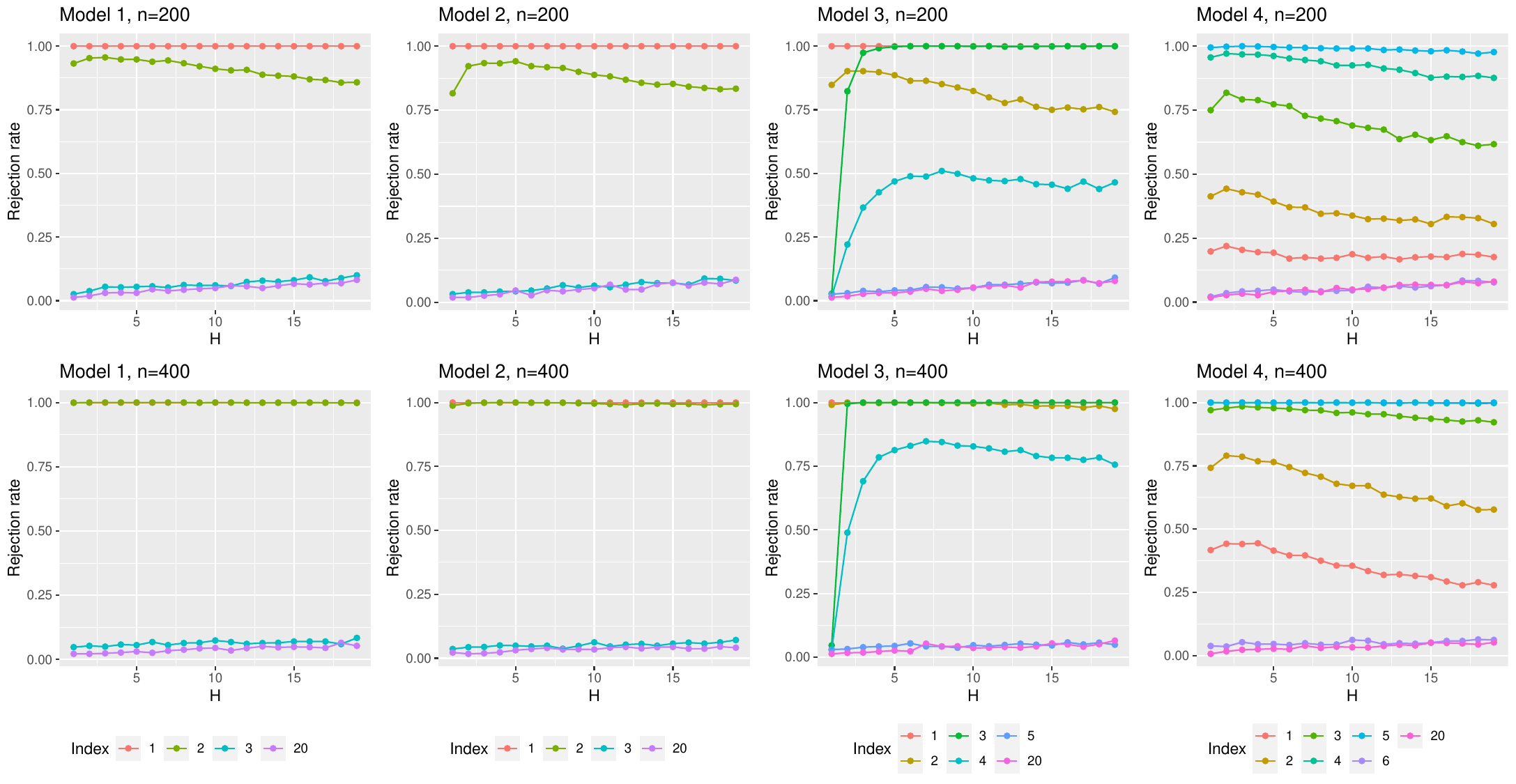}
    \caption{Empirical type I error rates and power with respect to $H$ for SDA-Chi. }
    \label{FigH2}
\end{figure}

\begin{figure}[h!]
    \centering
    \includegraphics[width=1\linewidth]{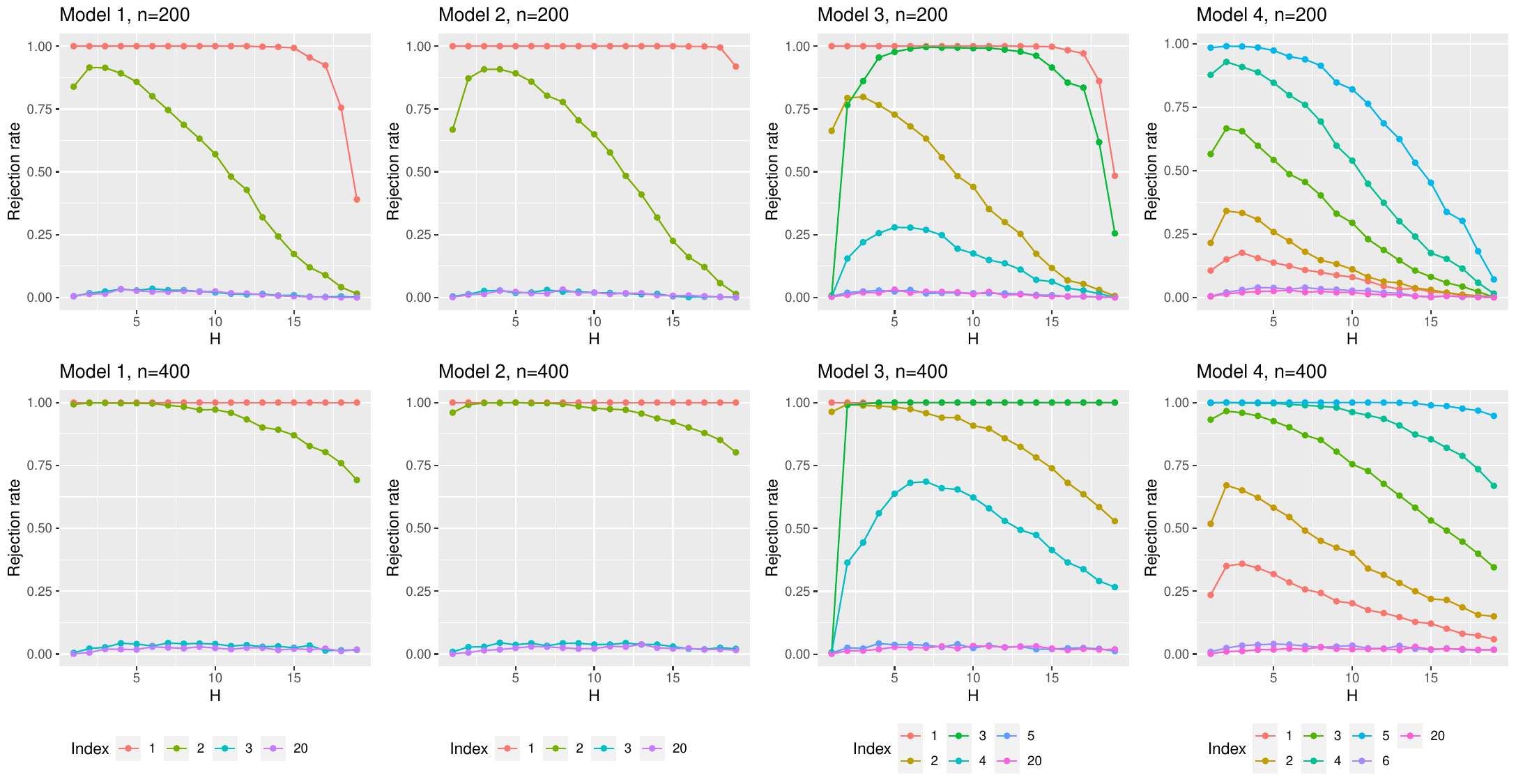}
    \caption{Empirical type I error rates and power with respect to $H$ for SDA-KS. }
    \label{FigH3}
\end{figure}

\begin{figure}[h!]
    \centering
    \includegraphics[width=1\linewidth]{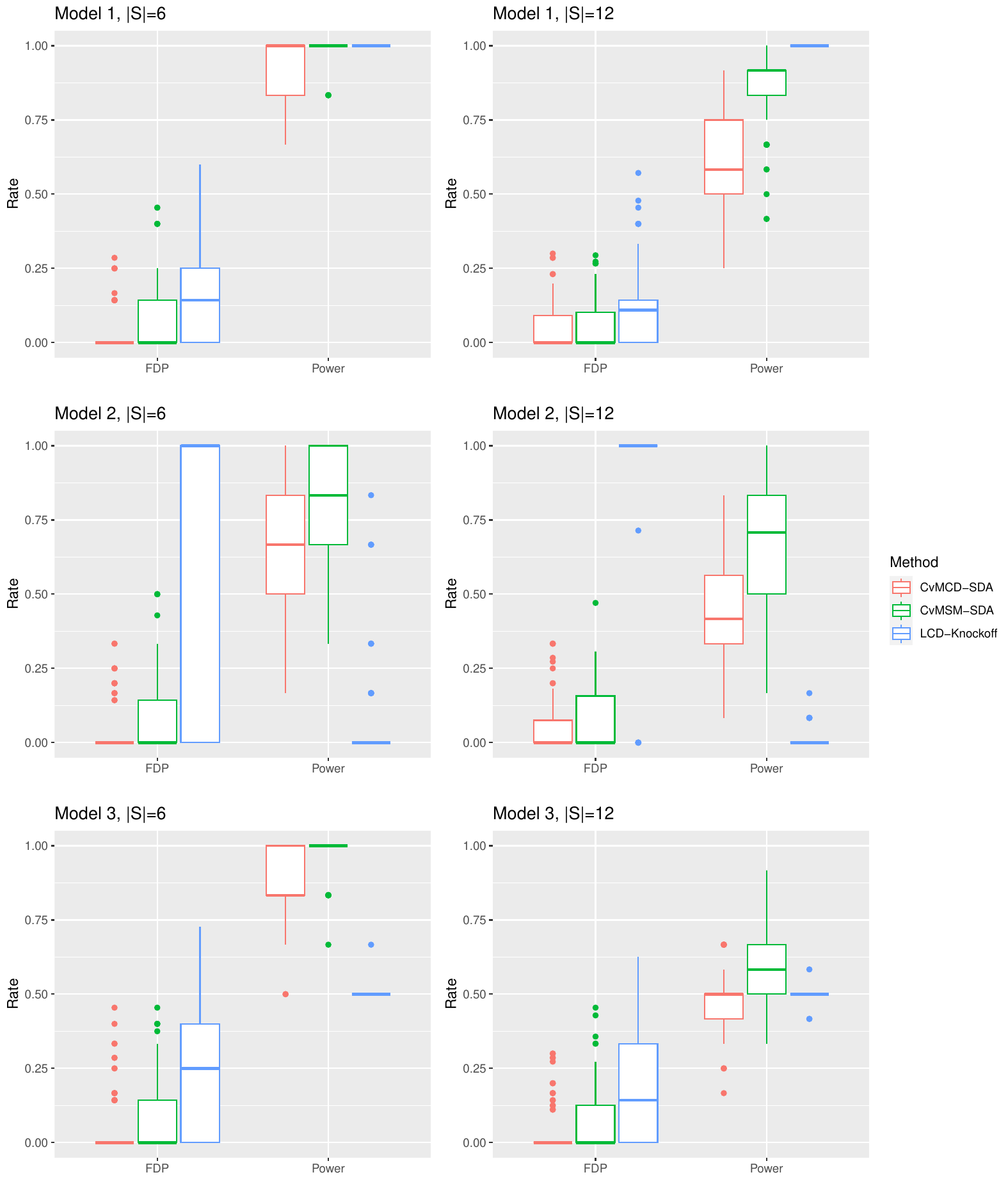}
    \caption{False discovery proportions and power. }
    \label{FigMT}
\end{figure}

\begin{figure}[h!]
    \centering
    \includegraphics[width=1\linewidth]{Figures/R1/Fig_miss.pdf}
    \caption{Empirical type I error rates and power for different sparsity levels. }
    \label{FigMiss2}
\end{figure}

\clearpage
\newpage

\section{Results for real data analysis}
\begin{table}[h!] \scriptsize
\caption{Results of CvMSM-SDA applied to ADNI gene expression data.}\label{table_results}
\centering
\begin{tabular}{l|l|l}
\toprule
    \hline
    $\textrm{Probe name}$ & $\textrm{Target description}$ & $\textrm{Confirmed by}$  \\
    \hline
    \multicolumn{3}{c}{FDR = 0.1} \\
    \hline
    11749948\_x\_at & Hydroxysteroid (17-beta) dehydrogenase 1 &  \citep{vinklarova2020} \\
 11727968\_at & Establishment of sister chromatid cohesion N-acetyltransferase 2 & \citep{wu2015}\\
 11719296\_a\_at & MAPK Associated Protein 1 &\citep{davoody2024}\\
 11715479\_a\_at & Gamma-Aminobutyric Acid Receptor-associated Protein&\citep{chen2024}\\
    \hline
    \multicolumn{3}{c}{FDR = 0.2: additional selections} \\
    \hline
    11715876\_a t& Tax-1 binding protein 3 &  -- \\
 11734725\_a\_at & Polynucleotide phosphorylase (PNPase) & \citep{hu2023}\\
 11737721\_x\_at & Collagen type XXV alpha 1 & \citep{tong2010}\\
 11721887\_a\_at & Crystallin Mu &\citep{sakkaki2024}\\
 11727893\_at & Proline And Arginine Rich End Leucine Rich Repeat Protein &\citep{mo2025}\\
 11731913\_at & G Protein-Coupled Receptor 12 & \citep{ozarslan2024}\\
 11755924\_a\_at & RAB11 family interacting protein 4 (class II)&\citep{sultana2022}\\
 \hline
 \bottomrule
\end{tabular}
\end{table}

\clearpage
\newpage

\bibliographystyle{apalike}
\bibliography{ref}

\end{document}